\newcommand{\ra}[1]{\renewcommand{\arraystretch}{#1}}
\newtheorem{ass}{Assumption}
\newtheorem{theo}{Theorem}
\newtheorem{mycorollary}{Corollary}
\title{Unbiased and Efficient Estimation of Causal Treatment Effects in Cross-over Trials}
\author{Jeppe Ekstrand Halkjær Madsen$^{* 12}$, Thomas Scheike$^1$, Christian Pipper$^{23}$}
\date{}
\begin{document}
\maketitle
$^1$Section of Biostatistics, Department of Public
Health, University of Copenhagen,
Copenhagen K, Denmark.

$^2$Biostatistics and Pharmacoepidemiology,
Medical Sciences, Leo Pharma A/S, Ballerup,
Denmark.

$^3$ Epidemiology, Biostatistics \& Biodemography,  Dept. of Public Health, University of Southern Denmark, Denmark.

\begin{abstract}
    % Purpose
    We introduce causal inference reasoning to cross-over trials, with a focus on  Thorough QT (TQT) studies. For such trials, we propose different sets of assumptions and consider their impact on the modelling strategy and estimation procedure. 
    
     % Methods
    We show that unbiased estimates of a causal treatment effect are obtained by a G-computation approach in combination with weighted least squares predictions from a working regression model. Only a few natural requirements on the working regression and weighting matrix are needed for the result to hold. It follows that a large class of Gaussian linear mixed working models lead to unbiased estimates of a causal treatment effect, even if they do not capture the true data generating mechanism.     
    
    % Results
    We compare a range of working regression models in a simulation study where data are simulated from a complex data generating mechanism with input parameters estimated on a real TQT data set. In this setting, we find that for all practical purposes working models adjusting for baseline QTc measurements have comparable performance. Specifically, this is observed  for working models that are by default too simplistic to capture the true data generating mechanism. 
    
    % Conclusion
    Cross-over trials and particularly TQT studies can be analysed efficiently using simple working regression  models without biasing the estimates for the causal parameters of interest. 
\end{abstract}

\textbf{Keywords: }Causal inference; Cross-over trials; TQT studies;
Model misspecification; Working regression models.

\section{Introduction}
A TQT study is an essential component of drug development, ensuring drug safety for patients. Therefore, it is a regulatory requirement to conduct such trials \cite{food_and_drug_administration_hhs_international_2005}. These trials have complicated designs in an attempt to minimize the sample size. This complexity has in turn led to a long-standing debate and several suggestions on how to best model the resulting data, and in particular how to use baseline measurements \cite{lu_efficient_2014, kenward2010use, novel,schall_mixed_2011,orihashi_concentration-qtc_2021}. In parallel with these developments, the causal inference literature, and recently also official regulatory recommendations, have increased the focus on clearly defining what we are actually trying to estimate. This has led to the endorsement of the so called estimand framework within regulatory guidance documents and the causal roadmap concept within the causal inference literature \cite{international2019addendum, van_der_laan_targeted_2011}. 

One of the central points made here is that we should enable our research question to be defined  in terms of the trial data and not just in terms of a specific model. That said, we would still want to use models in order to gain efficiency or eliminate bias. This is in line with recent regulatory guidance documents that encourage the active use of baseline variables in randomized trials for gaining precision of estimated treatment effects \cite{ema2015,fda2019}. 

The developments in this paper aim to support this push towards clearer and more focused statistical procedures. In particular, the causal inference approach we present facilitates a clear and transparent definition of our target of estimation in cross-over trials and specifically in TQT studies. 

Similar developments are already available in the literature for one-period trials. Here, standard estimators have been shown to be unbiased for causal parameters \cite{rosenblum,wang_model-robust_2021}. Moreover, appreciable gains in efficiency compared to unadjusted estimators has been demonstrated \cite{robinson_surprising_1991,hernandez_adjustment_2006,bartlett_covariate_2018}. We provide similar results for standard estimators in cross-over trials, and in particular in TQT studies.

For a standard TQT study, healthy volunteers \cite{international2019addendum} are enrolled with the purpose of obtaining electrocardiograms (ECGs) from each subject under different treatment conditions. An ECG measures the electrical output from the heart, resulting in replicates of graphical outputs as illustrated in Figure \ref{ecg}. The output is characterized by different waves, complexes, and intervals. In Figure \ref{ecg} we see the P, Q, R, S, and T waves. The interval from the beginning of the Q wave to the end of the T wave is called the QT interval, and is measured in milliseconds (ms). The QT interval measures the time it takes the heart to repolarize and prepare for the next beat. The longer the QT interval, the longer the time between heart beats, and the less oxygen is transported to cells in the body. Specifically,  prolongation of the QT interval has been shown to be related to an increased risk of Torsades de Pointes, a malignant ventricular arrhythmia. Thus, it is undesirable for the QT interval to be prolonged due to drug exposure. 
\begin{figure}[ht]
    \centering
    \includegraphics[scale=.3]{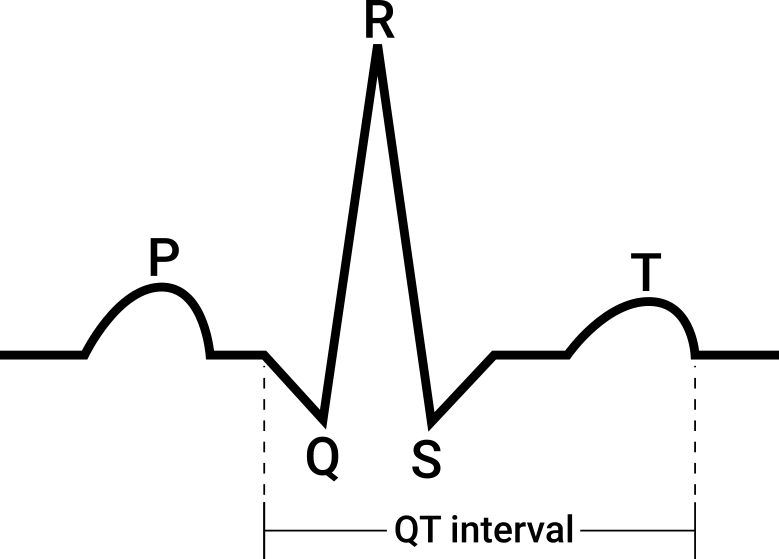}
    \caption{ECG.}
    \label{ecg}
\end{figure}
The length of the QT interval is positively associated with the length of the  RR interval, i.e. the interval from the R wave on an ECG until the next R wave on the ECG. Therefore, QT intervals are standardized in order to get a QTc (QT corrected) measurement corresponding to a particular length of the RR interval (typically 1 second). An example of a commonly used correction is the Fridericia correction, which is given by
\[
QTc = \frac{QT}{\sqrt[3]{RR}}.
\]
The purpose of the statistical analysis in a TQT study is to formally assess if clinically relevant prolongation is present based on the QTc measurements \cite{patterson_bioequivalence_2006}.

TQT studies are predominantly conducted as cross-over trials in an attempt to lower sample size and eliminate between subject variation by paired comparisons of different treatments. In a cross-over trial, each subject is randomized to one of several treatment sequences that uniquely determines the treatment they receive at any given treatment period throughout the trial. Within each period, a baseline QTc measurement is obtained just prior to treatment, followed by a number of post treatment QTc measurements obtained at pre-defined time points following treatment (see Figure \ref{cross} for the two-period case). Each pair of consecutive periods will be separated by a washout period to minimize the risk of carry-over effects, i.e. any effects of treatment from the  previous period on the QTc measurements in the current period. From a practical point of view, the main challenge with cross-over designs is that the washout period needs to be tailored to the half-life of drug concentration to reflect proper washout of the drug. Specifically, if the half life is long, an even longer (typically 5 half lives) wash-out period is required in order to avoid carry-over effects. Ultimately, this may impose a very long study period for the subjects enrolled in the study. This is clearly not optimal and may also prove to be a challenge with regard to case retention. In such situations, a parallel arm design may be more feasible \cite{food_and_drug_administration_hhs_international_2005}. 
\begin{figure}[ht]
    \centering
    \includegraphics[width=\linewidth]{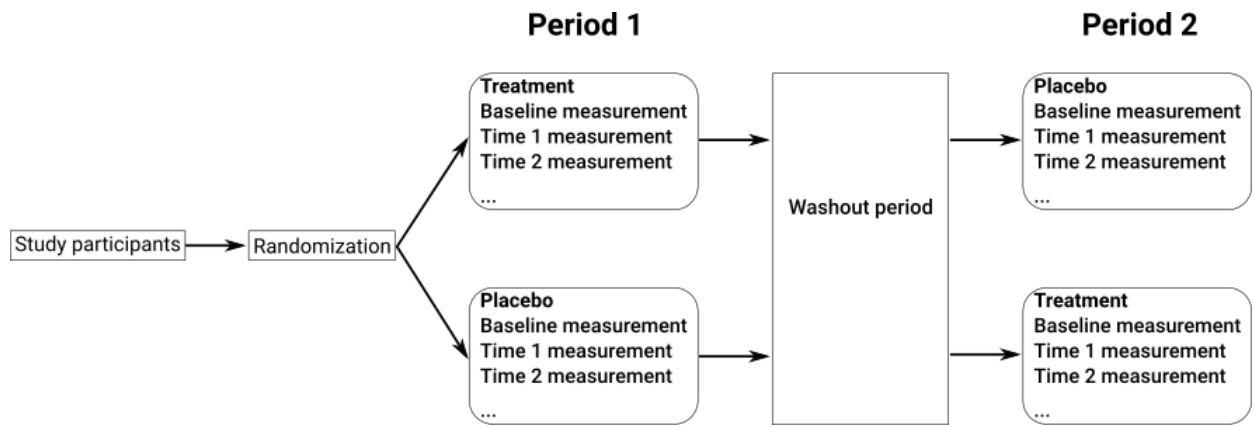}
    \caption{cross-over trial design.}
    \label{cross}
\end{figure}

The outline of the paper is as follows. In the next section we introduce the basic notation used in the paper and in that context define the fundamental assumption of no carry-over in a cross-over trial. We briefly introduce the concept of counterfactual outcomes and define the causal quantities of interest alongside the data assumptions needed to identify these quantities directly from the data. Section 3 is dedicated to deriving and assessing the theoretical performance of a number of causally motivated estimators. The section also contains the main result of this paper, which shows that a large class of working models still result in unbiased estimates of the causal target parameters even if they do not capture the true data generating mechanism. %In comparison, current approaches proposed in the literature focus on rather complex models, aiming at flexibly modelling the joint distribution of both baseline and post treatment measurements. 
In Section 4, we analyse data from a real TQT study using a range of working models that in theory result in unbiased estimates of the causal target parameters according to the main result of this paper. Section 5 is dedicated to comparing the same working models in a simulation study cast around the data example in order to evaluate performance in a realistic scenario. We conclude the paper with a discussion in Section 6.

\section{Notation and assumptions}
In the following, we introduce the notation and causal assumptions needed in order to identify the target of estimation. Let $Y_{ipt}$ denote the QTc measurement for subject $i$ in period $p$ at time $t$, $i=1,\ldots, n, p=1,\ldots, P, t = 1,\ldots, T$. Denote baseline measurement for subject $i$ in period $p$ by $X_{ip}$, and treatment by $Z_{ip}$. TQT studies tend to have as many treatments as periods. Thus, we will denote treatments by $0,\ldots P-1$, where $Z_{ip}=0$ corresponds to subject $i$ receiving placebo in period $p$. Often we will suppress the $i$ since we assume the subjects are independent draws from the same distribution. Furthermore, let $Y_p = (Y_{p1},\ldots, Y_{pT})^T$ denote the vector of post-baseline measurements in period $p$.

In line with the informed choice of washout period in TQT studies, we assume the wash-out period has been sufficient to ensure no carry-over effects. Under this assumption, our data can be described by the Directed Acyclic Graph (DAG) in Figure \ref{DAG} in the two-period case. Note that the DAG has no arrows from baseline measurements to post-baseline measurements, since we do not expect a causal effect of the baseline measurements. Instead, we expect any association between baseline measurements and post-baseline measurements to arise from the latent variables, $W$, $W_1$, and $W_2$ from Figure \ref{DAG}. The latent variable $W$ reflects the dependence owing to measurements being from the same subject, whereas the latent variables, $W_1$ and $W_2$ reflect the dependence between measurements from the same period, i.e. temporary traits. Despite the lack of arrows between baseline and post-baseline measurements, it still makes sense to adjust for baseline measurements, for example in a regression model, because we do not observe the latent variables, in which case the baseline measurements act as proxies for the latent variables. 
\iffalse
\begin{figure}[ht]
    \centering
    \resizebox{10cm}{!}{
    \begin{tikzpicture}
    \node (R) {$R$};
    \node[below left = of R] (B) {$Z_1$};
    \node[below right = of R] (E) {$Z_2$};
    \node[below left =  of B] (C) {$Y_1$};
    \node[below right = of E] (F) {$Y_2$};
    \node[below right = of C] (A) {$X_1$};
    \node[below left = of F] (D) {$X_2$};    
    \node[below = 2.6cm of R] (W) {$W$};
      
    \draw[->] (B) -> (C);
    \draw[->] (A) -> (C);
    \draw[->] (D) -> (F);
    \draw[->] (E) -> (F);
    \draw[->] (W) -> (C);
    \draw[->] (W) -> (F);
    \draw[->] (W) -> (A);
    \draw[->] (W) -> (D);
    \draw[->] (R) -> (B);
    \draw[->] (R) -> (E);
    \end{tikzpicture}}
    \caption{W = subject specific latent variable, R = Randomization.}
    \label{DAG}
\end{figure}
\fi
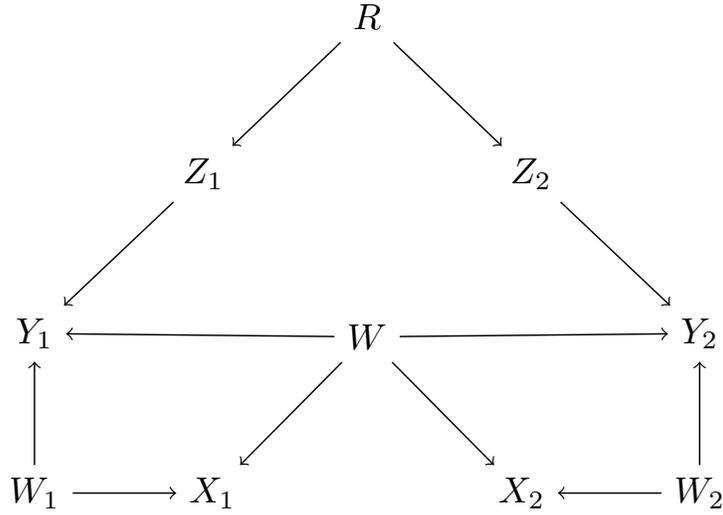
\begin{figure}[ht]
    \centering
    \resizebox{10cm}{!}{
    \begin{tikzpicture}
    \node (R) {$R$};
    \node[below left = of R] (B) {$Z_1$};
    \node[below right = of R] (E) {$Z_2$};
    \node[below left =  of B] (C) {$Y_1$};
    \node[below right = of E] (F) {$Y_2$};
    \node[below = of C] (W1) {$W_1$};
    \node[below = of F] (W2) {$W_2$};
    \node[below = 2.6cm of R] (W) {$W$};
    \node[right = of W1] (A) {$X_1$};
    \node[left = of W2] (D) {$X_2$};

    \draw[->] (B) -> (C);
    %\draw[->] (A) -> (C);
    %\draw[->] (D) -> (F);
    \draw[->] (E) -> (F);
    \draw[->] (W1) -> (C);
    \draw[->] (W2) -> (F);
    \draw[->] (W1) -> (A);
    \draw[->] (W2) -> (D);
    %\draw[->] (W1) -> (W2);
    \draw[->] (R) -> (B);
    \draw[->] (R) -> (E);
    \draw[->] (W) -> (A);
    \draw[->] (W) -> (D);
    \draw[->] (W) -> (C);
    \draw[->] (W) -> (F);
    %\draw[->] (W) -> (W1);
    %\draw[->] (W) -> (W2);
    \end{tikzpicture}}
    \caption{$W$, $W_1$, $W_2$ = subject specific latent variables, R = Randomization.}
    \label{DAG}
\end{figure}

The DAG in Figure \ref{DAG} implies the following about the data distribution:
\iffalse
\begin{ass}[No carry-over]
    Let $x = (x_1,\ldots, x_p)^T$, and likewise for $z$ and $y$, and let $f_x(x)$ be the density for variable $x$ and likewise for all other variables. 
    The distribution of our data satisfy the Markov factorization property with respect to the DAG in Figure \ref{DAG} \cite{peters_elements_2017}, i.e., the joint density of our data can be written as
    $$
    f(z,w,x,y) = f_z(z)f_w(w)f_x(x|w)f_y(y | z,w) = f_z(z)f_w(w)f_x(x|w)\prod_{p=1}^P f_{y_p}(y_p | x_p, z_p, w).
    $$
    \label{ass1}
\end{ass}
\fi
\begin{ass}[\bfseries No carry-over]
    Let $x = (x_1,\ldots, x_p)^T$, and likewise for $z$ and $y$, and let $f_x(x)$ be the density for variable $x$ and likewise for all other variables. 
    The distribution of our data satisfy the Markov factorization property with respect to the DAG in Figure \ref{DAG} \cite{peters_elements_2017}, i.e., the joint density of our data can be written as
    \begin{align*}
    f(z,w,x,y) =& f_z(z) f_w(w,w_1,\ldots, w_P) f_x(x|w,w_1,\ldots, w_P) f_y(y|z,w,w_1,\ldots, w_P) \\
    =& f_z(z) f_w(w,w_1,\ldots, w_P) \prod_{p=1}^P f_{x_p}(x_p|w,w_p) f_{y_p}(y_p|z_p,w,w_p).
    \end{align*}    
    \label{ass1}
\end{ass}
In accordance with Figure \ref{DAG}, Assumption \ref{ass1} states that the conditional distribution of $y$ in period $p$ only depends on treatment in period $p$ and the latent variables $w$ and $w_p$. %In practice, a mixed model will usually be used to model data from a TQT trial because it easily accommodates these latent variables. 

In the following, we take a causal approach to TQT studies. By doing so, we provide a clearly stated research question that is completely disentangled from the modelling of the data. This exercise provides complete clarity on what assumptions about the data generating mechanism are necessary to answer the research question and sets them apart from purely technical assumptions made during the modelling stage of the estimation. 

Let $Y_p^{z_1,\ldots,z_P}$ denote the post-baseline QTc measurements we would have made in period $p$ if, possibly counter to fact, the subject had received treatments $z_1,\ldots, z_P$. Note that under Assumption \ref{ass1}, the QTc measurements in period $p$ only depend on the treatment in period $p$. I.e., $Y_p^{z_1,\ldots,z_p,\ldots,z_P} = Y_p^{q_1,\ldots,z_p,\ldots,q_P} = Y_p^{z_p}$ for all treatment regimes and periods. Thus, from here on, $Y_{p}^{z_p}$ will denote the potential QTc measurements in period $p$ if the subject, possibly counter to fact, had received treatment $z_p$ in period $p$.   

If we were particularly interested in treatment $z$ and had ample resources in terms of money, time, and subjects, we would have made a two-arm trial and used
\[
E(Y_t^z - Y_t^0), \quad t = 1,\ldots, T,
\]
as the causal contrasts of interest. Note that the first period in a cross-over trial corresponds to such a trial and as a consequence the targeted causal contrasts can be identified as:
\[
E(Y_{1t}^z - Y_{1t}^0), \quad t = 1,\ldots, T. 
\]
We can then easily estimate the contrasts based on data from the first period only, for example by
\[
    \frac{\sum_{i=1}^n I(Z_{i1} = z)Y_{i1t}}{\sum_{i=1}^n I(Z_{i1} = z)} - \frac{\sum_{i=1}^n I(Z_{i1} = 0)Y_{i1t}}{\sum_{i=1}^n I(Z_{i1} = 0)}, \quad t = 1,\ldots, T.
\]
Clearly this is not an efficient use of all the data collected in the cross-over trial. However, to enable full use of the data, stricter assumptions than Assumption \ref{ass1} are needed. Specifically, we need to make assumptions about the distributions of the post-baseline measurements. To this end, it would be natural to  assume the same treatment effect in all periods:
\begin{ass}[\bfseries Same treatment effect]
    \[
    E(Y_p^z - Y_p^0) = E(Y_q^z - Y_q^0) 
    \]
    for all $p$ and $q$. I.e. the treatment effect is the same in all periods. 
    \label{sameeffect}
\end{ass}
Assumption \ref{sameeffect} enables estimation of one overall treatment effect across all periods. A special case where Assumption \ref{sameeffect} holds is when the distribution of period specific data does not depend on period:
\begin{ass}[\bfseries Same distribution]
    \[
    (Y_p^0,\ldots, Y_p^{P-1},X_p,Z_p) \stackrel{\mathcal{D}}{=} (Y_q^0,\ldots, Y_q^{P-1},X_q,Z_q),
    \]    
    for all $p$ and $q$.
    \label{allsame}
\end{ass}
 
Assumption \ref{allsame} is rather restrictive compared to Assumption \ref{sameeffect} in that it a priori excludes any systematic effect due to period. This contradicts current modelling and design practice in cross-over trials, where potential period effects are modelled and subjects are randomized according to a latin square in order to balance any period effect \cite{senn_cross-over_2002}. 

As an alternative, one can assume that the conditional distribution of the post-baseline measurements, given baseline measurements and treatment, is the same in all periods:
\begin{ass}[\bfseries Same relationship]
\begin{equation*}
(Y_p | X_p=x, Z_p = z) \stackrel{\mathcal{D}}{=} (Y_q | X_q=x, Z_q = z)
\end{equation*}
for all $p$ and $q$. 
\label{samerelation}
\end{ass}
This facilitates a model fit based on data from all periods, which may in turn be used to infer the period specific causal contrast $E(Y_p^z - Y_p^0)$.

\section{Estimation}
The causal framework from the last section enables us to clearly define our target of estimation. In this section, we provide inference procedures tailored to assess the causal target parameter in the context of a TQT study. For brevity, we only consider the case, where we have assumed the same average causal treatment effect in all periods. I.e., we specifically develop estimators for $E(Y_1^z - Y_1^0)$ under Assumption \ref{sameeffect}. An outline of how to estimate period specific average causal effects without Assumption \ref{sameeffect} is provided in Section 6.  

Under Assumption \ref{sameeffect} the fact that subjects receive both the placebo and active treatment initially motivates the following simple non-parametric estimator: 
\begin{equation*}
    \hat{\mu}_{1t}(z) = \frac{1}{n} \sum_{i=1}^n \sum_{p=1}^P I(Z_{ip}=z)Y_{ipt} - I(Z_{ip}=0)Y_{ipt},
\end{equation*}
where the indicator function, $I(A)$, maps elements of $A$ to one and is zero otherwise. Note that this, and all other estimators throughout this paper,
has $t$ in the subscript to indicate that it is the estimator for the post-baseline measurement time point $t$. Naturally, an effect will be estimated for each $t=1,\ldots, T$. 
Note that $\hat{\mu}_{1t}(z)$ simply takes the outcomes in the periods, where the subjects receive the treatment of interest, and subtracts the outcomes in the placebo periods,
and averages across subjects. It is an unbiased estimator for the treatment effects of interest due to the randomization. However, it only uses post-baseline measurements, and may therefore lack precision.
Alternatively, one can pursue fitting a working regression model to the data and thereby bring baseline measurements into play. We specifically assume that the possible outcome predictions in this working model are given by   $h_{pt}(x,z,\beta)$, where $\beta$ is a vector of regression parameters.
Under Assumption \ref{allsame} it is possible to ignore periods and use the simpler working regression model $h_{t}(x,z,\beta)$.
Such a working model can be used to estimate the causal effects of interest simply by plugging into the G-computation formula \cite{robins_new_1986}:
\[
    \hat{\mu}_{2t}(z) = \frac{1}{n\cdot P}\sum_{i=1}^n\sum_{p=1}^P [h_{pt}(X_{ip}, z, \hat{\beta})-h_{pt}(X_{ip}, 0, \hat{\beta})].
\]
This estimator uses covariate information to gain efficiency. Moreover, in situations with missing endpoint data, the estimator is still unbiased under the missing at random assumption given that the regression model is correctly specified. In comparison, the endpoint data has to be missing completely at random for the simple estimator $\hat{\mu}_{1t}(z)$ to be unbiased.  

%Note that $\hat{\mu}_{2t}(z)$ equals the estimated time specific main effect of treatment if a linear mixed model with a fixed time specific main effect of treatment is applied to the post baseline measurements. Thus, we can interpret the treatment effect estimate in that model as an estimate of the average causal effect of treatment at time $t$. 
Up front, the above developments depend on the fact that the working regression model is specified so that it captures the true mean value structure. Since this by no means is warranted, it is important to mitigate the impact in terms of bias if the working model is misspecified. Such a mitigation  can be successfully achieved with the following semi-parametric estimator:
\[
    \hat{\mu}_{3t}(z) = \hat{\mu}_{1t}(z) - \frac{1}{n}\sum_{i=1}^n\sum_{p=1}^P \left[\left(I(Z_{ip} = z) - \frac{1}{P}\right)h_{pt}(X_{ip}, z, \hat{\beta}) - \left(I(Z_{ip} = 0) - \frac{1}{P}\right)h_{pt}(X_{ip}, 0, \hat{\beta})\right].
\]
The estimator is derived in Appendix C. It uses covariates to gain precision, but is unbiased due to the independence between $X_p$ and $Z_p$. This independence is ensured by the randomization and implies that the last term has a mean of zero. Thus, the estimator has the same mean as the non-parametric estimator, namely the true causal effect. The following theorem shows that $\hat{\mu}_{3t}(z) = \hat{\mu}_{2t}(z)$ for certain types of working models. 
\iffalse
\begin{theo}[Unbiasedness of $\hat{\boldsymbol\mu}_{\mathbf{2t}}\mathbf{(z)}$]
    Assume the data structure of this paper, and assume we use a Gaussian linear mixed model with post-baseline measurements as outcome. Assume the model includes a main effect of treatment specific to each post baseline time point.  
    Assume a correlation structure of the form 
    \begin{align}
    \begin{pmatrix}
    A & B & \cdots & B \\
    B & A & \cdots & B \\
    \vdots & \vdots & \ddots & \vdots \\
    B & B & \cdots & A
    \end{pmatrix},
    \label{correlationstructure}
    \end{align}
    where A and B are $T\times T$ matrices. A is a positive definite variance matrix that reflects the dependence between observations from the same time period. B is a matrix that reflects the dependence between observations from different periods. If $A-B$ is non-singular, then 
    \[
    \hat{\mu}_{2t}(z) = \hat{\mu}_{3t}(z),
    \]
    when the model is estimated with maximum likelihood or restricted maximum likelihood. \label{mainresult}
\end{theo}
\fi
\begin{theo}[\bfseries Unbiasedness of $\hat{\boldsymbol\mu}_{\mathbf{2t}}\mathbf{(z)}$]\label{mainresult}
Assume the data structure of this paper, and assume we use a working model with post baseline measurements as outcome, and with main effects of treatment specific to each post baseline time point. Let $Y_i$ be the vector of all post baseline measurements for subject $i$, and let $h(X_i,Z_i,\beta)$ be the vector of all predictions for subject $i$ from the working model. Assume the working model parameters, $\beta$, are estimated from the following weighted least squares estimating equation:
\begin{equation}\label{linear_restriction}
\sum_{i=1}^n D_i V^{-1} (Y_i - h_i(X_i,Z_i,\beta)) = 0,
\end{equation}
where $D_i$ is the design matrix for subject $i$, and $V$ is a weight matrix on the form
\begin{align}
    \begin{pmatrix}
    A & B & \cdots & B \\
    B & A & \cdots & B \\
    \vdots & \vdots & \ddots & \vdots \\
    B & B & \cdots & A
    \end{pmatrix},
    \label{correlationstructure}
\end{align}
where A and B are $T\times T$ matrices. If $A$ and $A-B$ are non-singular, then 
\[
\hat{\mu}_{2t}(z) = \hat{\mu}_{3t}(z).
\]
\end{theo}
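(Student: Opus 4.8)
The plan is to reduce the claimed identity to a statement about weighted sums of the working-model residuals, and then to extract that statement from the estimating equation \eqref{linear_restriction} by combining the block structure of $V$ with the balance of the cross-over design. Throughout, write $r_{ipt}=Y_{ipt}-h_{pt}(X_{ip},Z_{ip},\hat\beta)$ for the fitted residuals, and for a treatment level $a$ and time point $t$ set $R_{at}=\sum_{i,p}I(Z_{ip}=a)\,r_{ipt}$ and $T_t=\sum_{i,p}r_{ipt}=\sum_a R_{at}$.

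The first step is elementary algebra. Inserting the definitions of $\hat\mu_{1t}$, $\hat\mu_{2t}$, $\hat\mu_{3t}$ and using that $I(Z_{ip}=a)\,h_{pt}(X_{ip},a,\hat\beta)=I(Z_{ip}=a)\,h_{pt}(X_{ip},Z_{ip},\hat\beta)$, the coefficients of $h_{pt}(X_{ip},z,\hat\beta)$ and $h_{pt}(X_{ip},0,\hat\beta)$ collapse and one obtains
\[
\hat\mu_{3t}(z)-\hat\mu_{2t}(z)=\frac1n\Big(\sum_{i,p}I(Z_{ip}=z)\,r_{ipt}-\sum_{i,p}I(Z_{ip}=0)\,r_{ipt}\Big)=\frac1n\big(R_{zt}-R_{0t}\big).
\]
Hence it suffices to show that $R_{at}$ does not depend on $a$, for every $t$.

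For the second step, note that because the working model carries a main effect of treatment specific to each post baseline time point (and, as is standard, a time-point-specific intercept), for each treatment level $a$ and each $t$ the $PT$-vector $e^{(i)}_{at}$ with entry $I(Z_{ip}=a)I(t'=t)$ in position $(p,t')$ is, uniformly in $i$ and with $i$-independent coefficients, a linear combination of rows of $D_i$; consequently \eqref{linear_restriction} yields $\sum_i (e^{(i)}_{at})^\top V^{-1}\big(Y_i-h_i(X_i,Z_i,\hat\beta)\big)=0$ for all $a$ and $t$. Writing $e^{(i)}_{at}=v^{(i)}_a\otimes u_t$ with $v^{(i)}_a=(I(Z_{i1}=a),\dots,I(Z_{iP}=a))^\top$ and $u_t$ the $t$-th unit vector of $\mathbb R^T$, and $Y_i-h_i(X_i,Z_i,\hat\beta)=\sum_s \rho_{is}\otimes u_s$ with $\rho_{is}=(r_{i1s},\dots,r_{iPs})^\top$, one inverts $V$ via \eqref{correlationstructure}: with $J_P$ the all-ones $P\times P$ matrix and $M=A+(P-1)B$, one has $V=I_P\otimes(A-B)+J_P\otimes B$, hence $V^{-1}=(I_P-\tfrac1P J_P)\otimes(A-B)^{-1}+\tfrac1P J_P\otimes M^{-1}$, which is legitimate since $V$, and hence $A-B$ and $M$, are non-singular. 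Using the defining property of a cross-over design that each subject receives each treatment exactly once, i.e. $\mathbf 1_P^\top v^{(i)}_a=1$ for every $i$ and $a$, the Kronecker bookkeeping turns the above estimating-equation identities into
\[
\sum_s\Big[\big(R_{as}-\tfrac1P T_s\big)\,\big((A-B)^{-1}\big)_{ts}+\tfrac1P\,T_s\,(M^{-1})_{ts}\Big]=0\qquad\text{for all }a\text{ and }t.
\]
The second summand does not depend on $a$, so differencing the identity for treatment $z$ against the one for treatment $0$ gives $\sum_s (R_{zs}-R_{0s})\big((A-B)^{-1}\big)_{ts}=0$ for all $t$; since $A-B$ is non-singular this forces $R_{zt}=R_{0t}$ for every $t$, and therefore $\hat\mu_{3t}(z)=\hat\mu_{2t}(z)$.

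The main obstacle is the weight matrix $V^{-1}$ interposed between the treatment-indicator vectors and the residuals: the estimating equation controls $\sum_i (e^{(i)}_{at})^\top V^{-1}(\cdot)$, whereas the quantity $R_{at}$ appearing in $\hat\mu_{3t}-\hat\mu_{2t}$ is the unweighted $\sum_i (e^{(i)}_{at})^\top(\cdot)$. Both hypotheses are used precisely to bridge this gap: the exchangeable block form \eqref{correlationstructure} makes $V^{-1}$ split into a within-period part carried by $(A-B)^{-1}$ and a between-period part carried by $\tfrac1P J_P\otimes M^{-1}$, while the design balance $\mathbf 1_P^\top v^{(i)}_a=1$ makes the between-period part contribute a term free of $a$; differencing in $a$ then removes the between-period part altogether, and non-singularity of $A-B$ finishes the argument. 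The remaining items — the telescoping in step one and the Kronecker computation — are routine.
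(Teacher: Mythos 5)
Your argument is correct and follows essentially the same route as the paper's Appendix A proof: reduce the identity $\hat{\mu}_{2t}(z)=\hat{\mu}_{3t}(z)$ to a statement about treatment-indicator-weighted residual sums, exploit the block-exchangeable form of $V^{-1}$ together with the fact that each subject receives each treatment exactly once, and finish with the non-singularity of $A-B$. The only differences are presentational: you obtain $V^{-1}$ via the projection/Kronecker decomposition rather than the Woodbury identity, and you conclude by differencing the arm-$z$ and arm-$0$ identities (so you only need $R_{zt}=R_{0t}$), whereas the paper first establishes $\sum_{i,p}\varepsilon_{ipt}=0$ and then shows each weighted residual sum vanishes separately.
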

\begin{proof}
The proof is provided in Appendix A.
\end{proof}
We know that $\hat{\mu}_{3t}(z)$ is unbiased by construction. Hence, Theorem \ref{mainresult} implies unbiased estimation when using $\hat{\mu}_{2t}(z)$ with a working model satisfying the conditions in Theorem \ref{mainresult}, regardless of the ability of the working model to capture the true data generating mechanism. 

Previously, the optimal choice of model for TQT studies has been debated extensively, based on the premise that such a model should provide a valid fit to the data. Theorem \ref{mainresult} implies the existence of a whole range of working models that may be used to provide unbiased estimates of causal treatment effects without such considerations. 

In this context, it is also worthwhile to note that many of the models that are traditionally applied in TQT studies satisfy the conditions in Theorem \ref{mainresult}. We summarize this in the following corollary. 
\begin{mycorollary}[\bfseries Gaussian linear mixed models]\label{cor1}
Assume the working model is a Gaussian linear mixed model with main effects of treatment specific to each post baseline time point, and a correlation structure satisfying (\ref{correlationstructure}). Then, $\hat{\mu}_{2t}(z) = \hat{\mu}_{3t}(z)$ if  model parameters are estimated by maximum likelihood or restricted maximum likelihood estimation. 
\end{mycorollary}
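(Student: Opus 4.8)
\emph{Proof proposal.} The plan is to exhibit the Gaussian linear mixed model as a special instance of the setup of Theorem~\ref{mainresult}: I will argue that its maximum likelihood (or REML) fixed-effect estimator solves an estimating equation of exactly the form (\ref{linear_restriction}) with a weight matrix $V$ of the block shape (\ref{correlationstructure}) satisfying the non-singularity hypotheses, and then simply invoke the theorem.

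First I would recall the standard profiling argument for Gaussian linear mixed models. Writing the marginal model for subject $i$ as $Y_i \sim \mathcal{N}\!\big(h_i(X_i,Z_i,\beta),\, V(\theta)\big)$, with $\theta$ collecting the variance components, maximisation of the likelihood (or the restricted likelihood) over $(\beta,\theta)$ yields, at the fitted variance components $\hat\theta$, a fixed-effect estimate $\hat\beta$ that solves the generalized least squares normal equations $\sum_{i=1}^n D_i\, V(\hat\theta)^{-1}\big(Y_i - h_i(X_i,Z_i,\beta)\big)=0$, where $D_i$ is the (constant) design matrix of the working model. This is valid verbatim for both ML and REML; the two differ only in the value of $\hat\theta$, and Theorem~\ref{mainresult} requires nothing of $\hat\theta$ beyond the structural form of $V(\hat\theta)$. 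So it suffices to take $V = V(\hat\theta)$ and check the two structural conditions of the theorem.

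Next I would verify the block form. Decompose $\mathrm{Var}(Y_i)$, for the stacked vector $Y_i=(Y_{i1}^{T},\ldots,Y_{iP}^{T})^{T}$, into the contribution of the subject-level random effect ($W$ in Figure~\ref{DAG}), the period-level random effects, and the within-period residual covariance: the subject-level part adds the \emph{same} $T\times T$ block to every sub-block, on and off the diagonal, while the period-level effects and the residual covariance contribute only to the diagonal sub-blocks and, because the model's correlation structure is assumed not to depend on period, contribute the \emph{same} matrix to each. Hence $V(\hat\theta)$ has a common diagonal block $A$ and a common off-diagonal block $B$, i.e.\ the form (\ref{correlationstructure}). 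For non-singularity, note that such a matrix is (up to an orthogonal change of basis on the period index) block-diagonal with one block $A+(P-1)B$ and $P-1$ copies of $A-B$; since $V(\hat\theta)$ is positive definite, both $A+(P-1)B$ and $A-B$ are positive definite, hence so is $A = \tfrac1P\big[(A+(P-1)B)+(P-1)(A-B)\big]$. Thus $A$ and $A-B$ are non-singular. Since the working model carries, by assumption, a main effect of treatment at each post-baseline time point, Theorem~\ref{mainresult} applies and gives $\hat\mu_{2t}(z)=\hat\mu_{3t}(z)$.

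I expect the only genuine obstacle to be the first reduction: one must be careful to phrase the GLS/profiling identity for the \emph{marginal} covariance (not a conditional one) and to confirm that, for a general mixed model of the stated type, this marginal covariance really does inherit the block-compound-symmetric shape (\ref{correlationstructure}) — after that, the non-singularity check and the appeal to Theorem~\ref{mainresult} are routine.
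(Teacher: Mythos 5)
Your proposal is correct and follows essentially the same route as the paper, which simply observes that the ML/REML fixed-effect estimates of a Gaussian linear mixed model solve the generalized least squares equations of the form (\ref{linear_restriction}) and then invokes Theorem \ref{mainresult}. You additionally supply details the paper leaves implicit — notably the verification via the block-compound-symmetry eigendecomposition that positive definiteness of $V(\hat\theta)$ forces $A$ and $A-B$ to be non-singular — which is a worthwhile, correct elaboration rather than a different argument.
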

\begin{proof}
The estimating equation for Gaussian linear mixed models is given on page 10 of \cite{jiang_linear_2007} and can be rewritten to (\ref{linear_restriction}).
\end{proof}
When the working model is a Gaussian linear mixed model, the requirement (\ref{correlationstructure}) corresponds to modelling the dependence within periods by some matrix $A$, and the dependence between periods by a matrix $B$. For example, $B$ might be a matrix of constants corresponding to a random subject effect, and $A$ can be modelled more flexibly, for example with an unstructured covariance structure, or an AR(1) covariance structure. 

In the special case of a Gaussian linear mixed model, consider $A = \sigma^2 I$ and $B$ a matrix of zeros. In this case, the working model is a standard linear regression model that ignores any dependence between observations. Corollary \ref{cor1} then ensures that we are able to produce sound inference even with this simplistic model. We do, however, expect this working model to be less efficient than if we model the dependence structure in a Gaussian linear mixed model. 

A particular example of a much used working regression model is given by \cite{patterson_bioequivalence_2006}:
\begin{equation}
h_{pt}(x,z,\beta)=\beta_{pt} + \beta_{x}x + \beta_{zt}. \label{patjonesmodel}
\end{equation}
Clearly the conditions of Theorem 1 are satisfied for the systematic part of this model, as it includes a main effect of treatment specific to each post baseline time point. 
Additionally, the covariance structure proposed in \cite{patterson_bioequivalence_2006} is AR(1) within periods, and assuming constant covariance between observations on the same subject in different periods. Accordingly, the proposed covariance structure complies with (\ref{correlationstructure}). The estimates of the time specific effects of treatment $\beta_{zt}$ equal the estimates obtained if we were to plug model (\ref{patjonesmodel}) into $\hat{\mu}_{2t}(z)$. Thus, the estimates of $\beta_{zt}$ are unbiased for the treatment effects of interest under arbitrary model misspecification.

In order to enable inference fully, we further need to  characterize the large sample behaviour. This is well established if the targeted treatment effects appear as parameters in the model, and assuming that the model is correctly specified. However, in more complex models, the target treatment effect is not readily identified as a parameter specified in the model. Moreover, model based standard errors may not be appropriate, unless the model is correctly specified. The influence function of $\hat{\mu}_{2t}(z)$ is derived in Appendix B under the assumption that the $\beta$ parameters are estimated using an M-estimator. For models not covered by Theorem 1, $\hat{\mu}_{3t}(z)$ is still unbiased whereas $\hat{\mu}_{2t}(z)$ may be biased. Therefore, it would be preferable to use $\hat{\mu}_{3t}(z)$ in such cases. Accordingly, we also derive the influence function for $\hat{\mu}_{3t}(z)$ in Appendix C. 

One particular use of the above asymptotic results is when assessing QT prolongation in a TQT trial. In this context, the test for QTc prolongation for the drug of interest is carried out by use of an intersection-union test. I.e. the null-hypothesis is 
\[
H_0\colon\bigcup_{t=1}^T \mu_t(z) \geq \Delta,
\]
where 
\[
\mu_t(z) = E(Y_{pt}^z - Y_{pt}^0),
\]
and $\Delta$ is some reasonable amount of QT prolongation, such as 10 ms \cite{patterson_bioequivalence_2006}. Commonly speaking, the null-hypothesis dictates that there exists a time point where QT prolongation exceeds a prespecified clinically negligible threshold.  Tests are carried out for each $t$ based on the asymptotic behaviour of the standardized estimates of $\mu_t(z)$, and the null is rejected if all of these tests are rejected.

In addition, it is common practice to assess if prolongation can be detected for the positive control. The corresponding null-hypothesis is
\[
H_0\colon\bigcap_{t=1}^T \mu_t(z) \leq \Delta.
\]
The test of this hypothesis needs to be adjusted for multiple testing. This adjustment should be made in the most efficient way possible, which is possible because we can estimate the joint (asymptotic) distribution of our estimators \cite{pipper_versatile_2012,hothorn_simultaneous_2008}. We can derive the joint asymptotic distribution of our estimators from the influence functions as
\[
\sqrt{n}(\hat{\mu}_2(z) - \mu(z)) = 
\begin{pmatrix}
\hat{\mu}_{21} - \mu_{1}(z) \\
\vdots \\
\hat{\mu}_{2T} - \mu_{T}(z)
\end{pmatrix} = \frac{1}{\sqrt{n}} \sum_{i=1}^n \begin{pmatrix}
\varphi_1(d_i) \\
\vdots \\
\varphi_T(d_i)
\end{pmatrix} + o_p(1) = \frac{1}{\sqrt{n}} \sum_{i=1}^n \varphi(d_i) + o_p(1),
\]
where $\varphi_t(d_i), t=1,\ldots, T$ are the influence functions of the individual estimators. 
It follows that the asymptotic variance matrix of the joint distribution of our estimators equals
\begin{equation}\label{asympvar}
\frac{E(\varphi^T \varphi)}{n}.
\end{equation}
The estimation of targeted treatment effects as outlined above is based on well known regression models and as we have seen the targeted treatment effects may sometimes even be identified directly as regression parameters in those models. Theorem \ref{mainresult} then guarantees that the identified regression parameters are estimated without bias, irrespective of whether the regression model is correctly specified or not. This guarantee, however, does not extend to the model based variance matrix of the estimates. For this to be appropriate, one also needs to assume that the regression model is correct. The asymptotic variance matrix (\ref{asympvar}) on the other hand is generally applicable. When the targeted treatment effects are identified as regression parameters, it may even be obtained by standard software \cite{clubsandwich}.

The asymptotic theory developed is applicable, as is when the sample size is large. However, TQT trials and many cross-over trials have a rather small sample size, in which case the appropriateness of the asymptotic theory is questionable. For such cases, there is a substantial literature on how to improve asymptotic standard errors and confidence intervals \cite{bell2002bias,colin_cameron_practitioners_2015,mackinnon_heteroskedasticity-consistent_1985,pustejovsky_small-sample_2018}. One simple improvement of the asymptotic standard errors from \cite{colin_cameron_practitioners_2015} is to use the influence function in the same way as we would in the context of a linear normal model. I.e. to estimate the variance by $\frac{1}{n-1}\sum_{i=1}^n \hat{\varphi}_i^2$ instead of $\frac{1}{n}\sum_{i=1}^n \hat{\varphi}_i^2$, and use the 97.5\% quantile of the t-distribution with $n-1$ degrees of freedom instead of a standard normal distribution to construct confidence intervals. These modifications vanish as the sample size increases, and will consequently lead to correct asymptotic inference. These simple small sample modifications are used in the analyses and simulations throughout this paper.

\section{Data example}
To illustrate the developments in this paper, we reanalyse a standard TQT trial also analysed in chapter 9 of \cite{patterson_bioequivalence_2006}. The data set is freely available on the book website, and consists of 41 subjects, two of which we have excluded due to missingness. There are three single-dose treatments (C, D, E) and a placebo (F). Treatment E is included as a positive control, i.e. treatment E is known to mildly prolong the QT interval. The subjects' QT intervals are measured in triplicates at baseline, 0.5 hours, 1 hour, 1.5 hours, 2.5 hours, and 4 hours post treatment. The triplicates are averaged at each time point.

In accordance with the developments presented in  Section 3, we analyse the data with a range of models of differing complexity, that, in theory, facilitate unbiased estimates of the average causal effects under Assumption \ref{sameeffect}. We informally compare these models in terms of obtained estimates and standard errors. 

Our benchmark model corresponds to the recommendation made in \cite{lu_efficient_2014}. This paper advocates a regression model including average baseline measurements as a covariate. It is shown in \cite{lu_efficient_2014} that this approach is consistent with the joint baseline and post baseline measurement model advocated in \cite{kenward2010use,meng_use_2010}. It is further argued in \cite{lu_efficient_2014} that the resulting estimates of the treatment effects will be superior in terms of precision. 

Specifically, the working model proposed in \cite{lu_efficient_2014} is given by:
\[
h_{pt}(x,z,\beta)=\beta_{pt} + \beta_{xt} x + \beta_{\bar{x}t} \bar{x} + \beta_{zt}, 
\]
where $\bar{X}$ is the average baseline measurement. The effect of the baseline measurements and average baseline measurements are different at different time points. 

Moreover, we fit a simpler model:
\[
h_{pt}(x,z,\beta)=    \beta_{pt} + \beta_{xt} x + \beta_{zt}, 
\]
i.e. a model without average baseline measurements, but still with interaction between the effect of baseline and time point. 

Furthermore, we fit an even simpler model:
\[
h_{pt}(x,z,\beta)=    \beta_{t} + \beta_{x} x + \beta_{zt}, 
\]
i.e. without average baseline measurement, no interaction between time point and period, and the effect of the baseline measurement is the same at all time points. All the models above have the treatment effects as specific parameters. In order to illustrate the modelling flexibility facilitated by Theorem \ref{mainresult}, we fit a model with interaction between baseline and treatment:    
\[
h_{pt}(x,z,\beta)=        \beta_{t} + \beta_{xz} x + \beta_{zt}. 
\]
Note that with the complexity of this model, it is no longer possible to identify the average causal effect as a parameter in the regression model. Therefore, we can no longer rely on standard inference of regression models, but need to rely on the general inference procedures developed in Section 3. 

On top of specifying a working regression structure for the models, we also need to specify the working covariance structure. The working covariance structure has to be on the form (\ref{correlationstructure}), and in the following we will use a random subject effect corresponding to $B$ being a matrix of constants unless otherwise specified. We consider three different specifications for the $A$ matrix: 
\begin{enumerate}
    \item Unspecified: all the variances and covariances have to be estimated.
    \item AR(1): variance matrix is on the form:
    \[A=
        \sigma^2
        \begin{pmatrix}
         1 & \rho & \hdots & \rho^{T-1} \\
         \rho & 1 & \hdots & \rho^{T-2} \\
         \vdots & \vdots & \ddots & \vdots \\
         \rho^{T-1} & \rho^{T-2}  & \hdots & 1
        \end{pmatrix},
    \]
    where $\rho$ and $\sigma^2$ are parameters to be estimated.
    %all variances equal, and covariance between adjacent time points is $\rho$, covariance when time points are two apart is $\rho^2$ and so on. $\rho$ is a parameter to be estimated.
    \item Independence: $A = \sigma^2 I$ , where $I$ is the identity matrix. In this case, $B$ will be a matrix of zeros corresponding to a standard linear regression model.
\end{enumerate}
%The mean structures and covariance structures of the working models we compare in the data example are displayed in Table \ref{modeller}.
%\begin{table}[htb]
%\centering
%\ra{1.2}
%\begin{tabular}{llll} \toprule
%Name & Short name for tables & Mean structure & Covariance structure  \\ \midrule
%Average baseline model & ABM & (\ref{meanstruct1}) & Unspecified  \\ 
%Interaction model 1 & Int 1 & (\ref{meanstruct2}) & AR(1) \\
%No-interaction model & No int &  (\ref{meanstruct3}) & AR(1) \\
%Linear regression & Linear &  (\ref{meanstruct3}) & Independence \\
%Interaction model 2 & Int 2 & (\ref{meanstruct4}) & AR(1) \\
%\bottomrule
%\end{tabular}
%\caption{Models for simulation.}
%\label{modeller}
%\end{table}
Last, we also fit the non-parametric estimator $\hat{\mu}_{1t}(z)$.
The estimates of the effect of treatment E compared to placebo at post baseline time 4 from the models are displayed in Table \ref{example_means}. Note that the standard errors and confidence intervals are based on the small sample size adjustment discussed in the last section. The remaining estimates of treatment effects are presented in Appendix D. The different models result in similar treatment effects as expected. In practice, the main reason for choosing one model over another is therefore in order to have as much efficiency as possible, and not because we expect to actually know the true data-generating mechanism. 
\begin{table}[htb]
\ra{1.2}
\centering
\begin{tabular}{@{}llccc@{}} \toprule
Mean structure & Covariance structure & Estimate & Standard Error & 95\% CI \\ 
\midrule
$\beta_{pt} + \beta_{xt} x + \beta_{\bar{x}t} \bar{x} + \beta_{zt}$ & Unspecified & 8.32 & 1.53 & (5.22, 11.42) \\
 & AR(1) & 8.11 & 1.49 & (5.09, 11.14) \\
 & Independence & 8.19 & 1.54 & (5.07, 11.32) \\
 $\beta_{pt} + \beta_{xt} x + \beta_{zt}$ & Unspecified & 8.22 & 1.52 & (5.14, 11.31) \\
 & AR(1) & 8.09 & 1.48 & (5.11, 11.08) \\
 & Independence & 8.19 & 1.52 & (5.12, 11.26) \\
 $\beta_{t} + \beta_{x} x + \beta_{zt}$ & Unspecified & 8.49 & 1.43 & (5.59, 11.40) \\
 & AR(1) & 8.30 & 1.44 & (5.38, 11.22) \\
 & Independence & 8.43 & 1.44 & (5.51, 11.34) \\
 $\beta_{t} + \beta_{xz} x + \beta_{zt}$ & Unspecified & 8.43 & 1.40 & (5.60, 11.26) \\
 & AR(1) & 8.29 & 1.43 & (5.49, 11.10) \\
 & Independence & 8.42 & 1.42 & (5.55, 11.30) \\
 $\hat{\mu}_{1t}(z)$ & & 8.18 & 2.05 & (4.03, 12.33) \\
   \bottomrule
\end{tabular}
\caption{Estimates and standard errors for data example.}
\label{example_means}
\end{table}
From Table \ref{example_means} we note that estimates of the targeted treatment effect across all models seem comparable. The standard errors are substantially larger with the non-parametric approach, whereas for all other model based approaches, standard errors are comparable. We investigate these observations further in a simulation study mimicking the data example in the next section.

\section{Simulation study}
Simulation studies have already demonstrated that it is theoretically possible to gain precision by including the average baseline measurement as a covariate  \cite{lu_efficient_2014,meng_use_2010}. However, it is unclear how much the addition of the average baseline covariates matters for the precision in realistic setups. Therefore, we have based our simulation on the data set from the previous section. 

Specifically, we have simulated the data as follows: a joint normal distribution of the baseline measurements is fitted to the baseline measurements in the data set, and baseline measurements are simulated according to this fit. The average baseline model from the previous section is fitted to the data set, and the post baseline measurements are simulated from that model.

There are at least two advantages to this approach: first, the simulation must be considered realistic since it is based on parameters estimated on a data set from a real TQT trial. Second, we know that models ignoring the average baseline measurements are too simple to capture the true data-generating mechanism. Thereby, the simulations can show us how much precision we can expect to lose by not using average baseline measurements as covariates in real TQT studies.   

We compare all the models from the last section. The models are misspecified both in terms of mean structure and in terms of correlation structure, but are all unbiased by Corollary \ref{cor1}. 
We ran 10000 simulations in the statistical program R \cite{R}, and the code is available at \url{https://github.com/Jeepen/TQTpaper}. The results of the simulations are displayed in Table \ref{means}. Standard errors and confidence interval coverages are again based on an adjustment for small sample size. As expected, all estimators have negligible bias. 
\begin{table}[htb]
\centering
\ra{1.2}
\begin{tabular}{@{}llcccc@{}} \toprule
Mean structure & Covariance structure & Bias & SD & Avg. SE & Coverage \\ 
\midrule
$\beta_{pt} + \beta_{xt} x + \beta_{\bar{x}t} \bar{x} + \beta_{zt}$ & Unspecified & 0.02 & 1.48 & 1.43 & 0.947 \\
 & AR(1) & 0.02 & 1.48 & 1.43 & 0.947 \\
 & Independence & 0.02 & 1.48 & 1.48 & 0.954 \\
 $\beta_{pt} + \beta_{xt} x + \beta_{zt}$ & Unspecified & 0.01 & 1.49 & 1.45 & 0.945 \\
 & AR(1) & 0.01 & 1.49 & 1.45 & 0.945 \\
 & Independence & 0.01 & 1.52 & 1.52 & 0.952 \\
 $\beta_{t} + \beta_{x} x + \beta_{zt}$ & Unspecified & 0.02 & 1.48 & 1.46 & 0.951 \\
 & AR(1) & 0.02 & 1.48 & 1.46 & 0.951 \\
 & Independence & 0.02 & 1.51 & 1.51 & 0.954 \\
 $\beta_{t} + \beta_{xz} x + \beta_{zt}$ & Unspecified & 0.02 & 1.48 & 1.38 & 0.939 \\
 & AR(1) & 0.02 & 1.47 & 1.38 & 0.939 \\
 & Independence & 0.02 & 1.50 & 1.49 & 0.951 \\
 $\hat{\mu}_{1t}(z)$ & & 0.02 & 1.94 & 1.94 & 0.952 \\
   \bottomrule
\end{tabular}
\caption{Bias, standard deviation of estimates, and coverage of confidence intervals in simulations.}
\label{means}
\end{table}
The standard error estimates seem approximately correct compared to the sample standard deviation of the estimates, and the coverages of the confidence intervals are consequently close to 95 \%. The standard deviations in the table show that by far the majority of the gain in precision from using a model comes from the inclusion of the baseline measurements. Using the average baseline measurement as a covariate adds little precision, even over the linear regression model. However, the gain in precision from including the average baseline as a covariate and using a more flexible covariance structure than independence is for free, since Corollary \ref{cor1} implies unbiased estimation in any case. 

\section{Discussion}
We have introduced causal reasoning to the field of TQT studies. We have shown how typical choices of estimators can be given very clear causal interpretations in terms of the data, and not just in terms of specific models. Furthermore, we have shown that popular choices of working models in many circumstances yield unbiased estimates of causal parameters under arbitrary model misspecification. We have illustrated these results in a data example and a simulation study. 

However, the unbiasedness of the proposed estimators follows from the balancing induced by randomization. In practice, we may have missing data, which can invalidate the balancing initially ensured by the randomization. That said, the amount of missing data is typically negligible in TQT studies owing to the fact that they are most often conducted on healthy subjects who are paid to participate \cite{food_and_drug_administration_hhs_international_2005}. 
In cases with a non-negligible amount of missing data the working regression models can be fitted to the observed data with MLE assuming missingness at random (MAR), and distinct parameters for the missing data mechanism and for the outcome. As a consequence, it is straightforward to estimate the causal effects when using a linear mixed model, where the causal effects are identified as the main effects of treatment. However, this becomes challenging if we consider more complex models. In that case, a viable strategy could be to use imputation or weighting methods in order to go from model to estimates of the causal effects \cite{little_statistical_2020,tsiatis_semiparametric_2006}. 

We have focused on how to estimate causal effects under assumption (\ref{sameeffect}), that is,  assuming the effects are the same in all periods. To complement these developments, it is interesting to consider what we are estimating if that assumption does not hold. In general, the mean of $\hat{\mu}_{1t}(z)$ equals
\[
\frac{1}{P}\sum_{p=1}^P E(Y_{pt}^z - Y_{pt}^0),
\]
i.e. the average of the causal effects for each period. $\hat{\mu}_{3t}(z)$ has the same mean due to the randomization, and $\hat{\mu}_{2t}(z)$ will equal $\hat{\mu}_{3t}(z)$ when estimation is done according to Theorem 1. Thus, in general,  the above strategy will lead to unbiased estimation of the average of the period specific causal effects.   
Alternatively, one may fit a working model to all data under Assumption \ref{samerelation}, and subsequently apply G-computation for a single period:
\begin{equation}
\frac{1}{n}\sum_{i=1}^n h(X_{iPt}, z,\hat{\beta}) - h(X_{iPt}, 0,\hat{\beta}).
\label{one-period}
\end{equation}
The estimation of $h$ gains precision by using data from all periods, which in turn makes (\ref{one-period}) more precise. The estimator (\ref{one-period}) emulates a standard one-period trial, while it is unclear what we are emulating by ignoring the period specific treatment effects \cite{hernan_observational_2008}. This is a topic for further research.  

Two other theoretical issues also deserve more attention. First, we have not theoretically shown that $\hat{\mu}_{2t}(z)$ or for that matter $\hat{\mu}_{3t}(z)$ are in fact more efficient than the non-parametric estimator $\hat{\mu}_{1t}(z)$. We, however, suspect this to be the case in line with the results obtained for one-period trials in \cite{bartlett_covariate_2018,van_der_laan_targeted_2011}. Second, the impact  of a violation of restrictions on the working model dictated by Theorem 1 deserves further investigation. Possibly, a more flexible weight matrix than what is warranted by Theorem 1 may lead to further efficiency gains.

Finally, we would like to point out the difference between the estimation procedure proposed in this paper and the traditional approach of reporting treatment effects based on differences in least squares means  (see for example chapter 8 of \cite{patterson_bioequivalence_2006}). It is duly noted that differences in least squares means  are equivalent to $\hat{\mu}_{2t}(z)$ when the treatment effects are modelled as main effects in a linear mixed model. However, they are not equivalent to $\hat{\mu}_{2t}(z)$ when the model is more complex. In those cases, least squares means lack a proper causal interpretation in a meaningful population and on those grounds $\hat{\mu}_{2t}(z)$ or $\hat{\mu}_{3t}(z)$ should be preferred for assessing causal treatment effects.

\bibliographystyle{plain}
\bibliography{refs}

\section{Appendix A: Proof of Theorem 1}
We base the proof on rewriting the estimator $\hat{\mu}_{3t}(z)$ as
\begin{align}
    \hat{\mu}_{3t}(z) =& \frac{1}{n}\sum_{i=1}^n\sum_{p=1}^P I(Z_{ip} = z)Y_{ipt} - I(Z_{ip} = 0)Y_{ipt} \nonumber \\
    &- \frac{1}{n}\sum_{i=1}^n\sum_{p=1}^P \left[\left(I(Z_{ip} = z) - \frac{1}{P}\right)h_{pt}(X_{ip}, z, \hat{\beta}) - \left(I(Z_{ip} = 0) - \frac{1}{P}\right)h_{pt}(X_{ip}, 0, \hat{\beta})\right]\nonumber \\
    =&\frac{1}{n\cdot P}\sum_{i=1}^n\sum_{p=1}^P [h_{pt}(X_{ip}, z, \hat{\beta})-h_{pt}(X_{ip}, 0, \hat{\beta})] \nonumber \\
    &+ \frac{1}{n}\sum_{i=1}^n\sum_{p=1}^P I(Z_{ip}=z)(Y_{ipt} - h_{pt}(X_{ip}, z, \hat{\beta})) - I(Z_{ip}=0)(Y_{ipt} - h_{pt}(X_{ip}, 0, \hat{\beta})).\nonumber
\end{align}
The last rewriting shows that $\hat{\mu}_{3t}(z)$ is equal to $\hat{\mu}_{2t}(z)$ plus an adjustment term, i.e., $\hat{\mu}_{2t}(z)=\hat{\mu}_{3t}(z)$ if the last term is zero. 

%Linear mixed models solve the equation
%\[
%X^TW^{-1} y - X^TW^{-1}X\beta = 0,
%\]
%where $X$ is the design matrix, and $W$ is the variance matrix (see for example page 10 of \cite{jiang_linear_2007}). 
%We are assuming, as is common, independent subjects, so the above, by the use of block matrices equals
We assume the $\beta$-parameters are estimated from the following weighted least squares estimating equation.
\begin{equation}
\sum_{i=1}^n D_i V^{-1} (Y_i - h(X_{i}, Z_{i}, \beta)) = 0, \label{estimatingequation}
\end{equation}
where $V$ is on the form (\ref{correlationstructure}).
The inverse of these matrices is also on the form
\begin{equation}
    \begin{pmatrix}
    A & B & \cdots & B \\
    B & A & \cdots & B \\
    \vdots & \vdots & \ddots & \vdots \\
    B & B & \cdots & A
    \end{pmatrix}^{-1} =
    \begin{pmatrix}
    C & D & \cdots & D \\
    D & C & \cdots & D \\
    \vdots & \vdots & \ddots & \vdots \\
    D & D & \cdots & C
    \end{pmatrix},
    \label{matrixshape}
\end{equation}
where C and D are matrices of the same dimensions as A and B, namely $T\times T$.
%If B is not a matrix of constants, but is invertible, then (\ref{matrixshape}) can be realized by using the Woodbury identity (Theorem 18.2.8 in \cite{harville_matrix_2008}):
This can be realized by using the Woodbury identity (Theorem 18.2.8 in \cite{harville_matrix_2008}). The Woodbury identity states that assuming $X$ is nonsingular, then $X+Y$ is nonsingular if and only if $I + X^{-1}Y$ is nonsingular, and in that case:
\[
(X+Y)^{-1} = X^{-1} - X^{-1}Y(I + X^{-1}Y)^{-1}X^{-1}.
\]
For our purposes we will choose: 
\[
X = \begin{pmatrix}
A-B & 0 & \cdots & 0 \\
0 & A-B & \cdots & 0 \\
\vdots & \vdots & \ddots & \vdots \\
0 & 0 & \cdots & A-B
\end{pmatrix},
\]
and 
\[
Y = \begin{pmatrix}
B & B & \cdots & B \\
B & B & \cdots & B \\
\vdots & \vdots & \ddots & \vdots \\
B & B & \cdots & B
\end{pmatrix},
\]
such that $V = X + Y$. To realize that (\ref{matrixshape}) follows from the Woodbury identity, one needs to realize that $X^{-1}$ is block diagonal with $(A-B)^{-1}$ in the diagonal. Moreover, 
\[
X^{-1}Y = \begin{pmatrix}
(A-B)^{-1}B & (A-B)^{-1}B & \cdots & (A-B)^{-1}B \\
(A-B)^{-1}B & (A-B)^{-1}B & \cdots & (A-B)^{-1}B \\
\vdots & \vdots & \ddots & \vdots \\
(A-B)^{-1}B & (A-B)^{-1}B & \cdots & (A-B)^{-1}B
\end{pmatrix}.
\]
The inverse of $(I + X^{-1}Y)$ is then $EF$, where
\begin{align*}
&E = \begin{pmatrix}
(P-1)(A-B)^{-1}B + I & -(A-B)^{-1}B & \cdots & -(A-B)^{-1}B \\
-(A-B)^{-1}B & (P-1)(A-B)^{-1}B + I & \cdots & -(A-B)^{-1}B \\
\vdots & \vdots & \ddots & \vdots \\
-(A-B)^{-1}B & -(A-B)^{-1}B & \cdots & (P-1)(A-B)^{-1}B + I
\end{pmatrix}, \\
&F = \begin{pmatrix}
(P(A-B)^{-1}B+I)^{-1} & 0 & \cdots & 0 \\
0 & (P(A-B)^{-1}B+I)^{-1} & \cdots & 0 \\
\vdots & \vdots & \ddots & \vdots \\
0 & 0 & \cdots & (P(A-B)^{-1}B+I)^{-1}
\end{pmatrix}.
\end{align*}
This can be checked by multiplying the matrices and see that they give you the identity (it might be easier if you substitute $(A-B)^{-1}B$ with $K$). 
Then $X^{-1}Y(I + X^{-1}Y)^{-1}X^{-1}$ is a block matrix where all blocks are equal. 
Then the Woodbury identity gives us a block diagonal matrix minus a block matrix with all blocks equal, which is on the form (\ref{matrixshape}).   

For convenience, we will introduce
\[
\varepsilon_{ipt} = Y_{ipt} - h_{pt}(X_{ip}, Z_{ip}, \beta).
\]
There is an equation in (\ref{estimatingequation}) for each column in our design matrix, i.e. one equation for each $\beta$-parameter. However, as it turns out, we only need some equations to get $\hat{\mu}_{2t}(z) = \hat{\mu}_{3t}(z)$. We re-parametrize our model so that instead of having a $\beta$-parameter per combination of treatment and time point, we have main effects from time points, and main effects of non-placebo treatment per time point, i.e. placebo treatment becomes a reference level. First, we need the equations coming from having an effect of time point in our model. If we specifically look at the estimating equation coming from the effect of time point $t_0\in \{1,\ldots, T\}$, then it looks like this:  
\begin{align}
    0&=\sum_{i=1}^n\sum_{p=1}^P\sum_{t=1}^T\varepsilon_{ipt}\cdot (c_{tt_0} + (P-1)d_{tt_0}) \nonumber\\
    &=\sum_{t=1}^T(c_{tt_0} + (P-1)d_{tt_0})\sum_{i=1}^n\sum_{p=1}^P\varepsilon_{ipt}, \quad t_0=1,\ldots, T. \label{firstequation}
\end{align}
The inverse variance matrix leads to all residuals from all time points, periods, and subjects, being included in the estimating equation (\ref{firstequation}) for time point $t_0$. However, (\ref{firstequation}) is T linear equations, one for each $t_0$, with T unknowns, all equalling zero. Hence, we can conclude
\begin{equation}
\sum_{i=1}^n\sum_{p=1}^P\varepsilon_{ipt} = 0,\quad t=1,\ldots, T. \label{timecond}
\end{equation}
\iffalse
\begin{equation}
\sum_{i=1}^n \sum_{p=1}^P \varepsilon_{ipt} = 0\quad t=1,\ldots, T. \label{timecond}    
\end{equation}

The model in (\ref{patjonesmodel}) has interaction between period and time point, which implies:
\begin{align*}
0 =& \sum_{i=1}^n \sum_{t=1}^T \varepsilon_{ip_0t}c_{tt_0} + \sum_{q\neq p_0} \varepsilon_{iqt}d_{tt_0} \\
=& \sum_{t=1}^T c_{tt_0} \sum_{i=1}^n \varepsilon_{ip_0t} + d_{tt_0} \sum_{q\neq p_0} \sum_{i=1}^n \varepsilon_{iqt},\quad t_0=1,\ldots, T,p_0=1,\ldots,P.      
\end{align*}
This is $P\cdot T$ linear equations with $P\cdot T$ unknowns, and like before, this implies
\begin{align}
    \sum_{i=1}^n \varepsilon_{ipt} = 0,\quad t=1,\ldots, T, p = 1,\ldots, P. \label{periodcond}
\end{align}
Note that (\ref{periodcond}) implies (\ref{timecond}). \fi
The second set of estimating equations we need comes from having main effects of treatment per time point in our model. The estimating equation corresponding to the effect of treatment $z$ at time point $t_0$ is:
\begin{align*}
    0 &= \sum_{i=1}^n\sum_{p=1}^P\sum_{t=1}^T\varepsilon_{ipt}\cdot (c_{tt_0}\cdot I(Z_{ip}=z) + d_{tt_0}\sum_{q\neq p} I(Z_{iq}=z)) \\ 
    &\stackrel{(*)}{=} \sum_{i=1}^n\sum_{p=1}^P\sum_{t=1}^T\varepsilon_{ipt}\cdot I(Z_{ip}=z)\cdot (c_{tt_0} - d_{tt_0}) + \varepsilon_{ipt}\cdot d_{tt_0}  \\
    &=\sum_{t=1}^T(c_{tt_0} - d_{tt_0})\sum_{i=1}^n\sum_{p=1}^P\varepsilon_{ipt}\cdot I(Z_{ip}=z) +\sum_{t=1}^T d_{tt_0} \sum_{i=1}^n\sum_{p=1}^P \varepsilon_{ipt} \\
    &\stackrel{(**)}{=}\sum_{t=1}^T(c_{tt_0} - d_{tt_0})\sum_{i=1}^n\sum_{p=1}^P\varepsilon_{ipt}\cdot  I(Z_{ip}=z),\quad t_0=1,\ldots, T.
\end{align*}
(*) comes from the fact that all subjects receive each treatment exactly once, so that $\sum_{q\neq p} I(Z_{iq}=z) = 1 - I(Z_{ip} = z)$. (**) comes from (\ref{timecond}). The rest is just interchanging the order of summation. Again, we have T equations with T unknowns and can conclude
\begin{equation}
    \sum_{i=1}^n\sum_{p=1}^P\varepsilon_{ipt}\cdot I(Z_{ip}=z) = 0, \quad t=1,\ldots, T, z = 1,\ldots, P-1, \label{treatcond}
\end{equation}
as wanted.
\iffalse
\begin{align}
    0 =& \sum_{i=1}^n \sum_{p=1}^P (I(Z_{ipt} = z) + c\cdot \sum_{q\neq p} I(Z_{iqt} = z))\varepsilon_{ipt} \nonumber \\ 
    =& \sum_{i=1}^n\sum_{p=1}^P (I(Z_{ipt} = z) + c\cdot (1-I(Z_{ipt} = z)))\varepsilon_{ipt} \Rightarrow \nonumber \\
    \sum_{i=1}^n&\sum_{p=1}^P I(Z_{ipt} = z)\varepsilon_{ipt} = -\frac{c}{1-c}\sum_{i=1}^n\sum_{p=1}^P \varepsilon_{ipt} = 0,\quad t = 1,\ldots, T, z = 1,\ldots P-1, \label{treatcond}
\end{align}
where the last equality follows from (\ref{timecond}). The constant $c$ comes from the D matrix in the inverse variance matrix. Again, strictly speaking, we should multiply a constant on the term $I(Z_{ipt}=z)$ as well, but again, because the equation equals zero, we get to normalize to make things easier to handle.
\fi 
Note that the above argument only works for all other treatments than the placebo, since placebo is the reference treatment. However, we can rewrite (\ref{timecond}) to
\begin{align*}
0 =& \sum_{i=1}^n \sum_{p=1}^P \varepsilon_{ipt}\\
=& \sum_{i=1}^n \sum_{p=1}^P\sum_{z=0}^{P-1} I(Z_{ip}
= z)\varepsilon_{ipt} \\
=& \sum_{i=1}^n \sum_{p=1}^P I(Z_{ip}
= 0)\varepsilon_{ipt},
\end{align*}
where the last equality comes from (\ref{treatcond}). Thus, the extra term in $\hat{\mu}_{3t}(z)$ is zero, and $\hat{\mu}_{3t}(z) = \hat{\mu}_{2t}(z)$. Adding other covariates or terms to the regression will not change this fact as long as we have main effects of treatment specific to each post baseline time point. %Note that this is irrespective of whether we use REML or MLE, since they both solve the above equations, with some constants being different, because they estimate the variance parameters in the variance matrix differently. 

\iffalse
Other examples of models that fulfil $\hat{\mu}_{3t}(z) = \hat{\mu}_{2t}(z)$, and therefore are unbiased when used in $\hat{\mu}_{2t}(z)$ are:
\begin{itemize}
    \item Linear regression, ignoring the correlation between measurements on the same subject. This is actually a special case of the above, where A is diagonal, and B is a matrix of zeros. 
    \item Any tree based model fit to each treatment arm and time point combination separately.
\end{itemize}
Again, note that this result can be used for any cross-over trial, and not just TQT studies. 
\fi

\section{Appendix B: Influence function for $\mathbf{\hat{\boldsymbol\mu}_{2t}(z)}$}
Let's say that the estimation in the first stage solves the following equation:
\begin{equation}
    \sum_{i=1}^n m(d_i, \beta) = 0,
    \label{mestimator}
\end{equation}
where $d_i$ is all the information we have about subject $i$. This would for example be the case if we used MLE in which case $m$ would be the score function. Denote the solution to (\ref{mestimator}) by $\hat{\beta}_n$, and the limit in probability of $\hat{\beta}_n$ by $\beta_0$.
Then the influence function of $\hat{\mu}_{2t}(z)$ is found using Theorem 6.1 from \cite{newey_chapter_1994}: 
\begin{equation}
    \varphi_t(d_i) = \frac{1}{P}\sum_{p=1}^P \left[h(X_{ip}, z, \beta_0)-h(X_{ip}, 0, \beta_0) - \mu_t(z)\right] + G_\beta \psi(d_i),
    \label{mu2influence}
\end{equation}
where
\begin{align*}
    &G_\beta = E\left(\nabla_\beta \frac{1}{P}\sum_{p=1}^P[h(X_{p}, z, \beta_0)-h(X_{ip}, 0, \beta_0)]\right), \\
    &\psi(d) = -(E(\nabla_\beta m(d,\beta_0))^{-1} m(d,\beta_0).
\end{align*}
The first term in (\ref{mu2influence}) represents the uncertainty coming from the covariate distribution, and is zero if the model simply is a main term linear mixed model, where $\hat{\mu}_{2t}(z)$ is a parameter in the model, and thereby independent of covariates. The second term represents the uncertainty arising from the estimation of the $\beta$-parameters in the first stage. The variance of the estimator is then 
\[
\frac{E(\varphi_t(d)^2)}{n}.
\]

\section{Appendix C: Derivation of $\mathbf{\hat{\boldsymbol\mu}_{3t}(z)}$}
The estimator is derived in the same way as the semi-parametric estimator for the pretest-posttest study in \cite{tsiatis_semiparametric_2006}, but with $\hat{\mu}_{1t}(z)$ as the non-parametric starting point. Note that this won't result in the efficient estimator in our setup, since we also have the assumption of the same treatment effect in all periods, which is not used in the derivation. 

The influence function of $\hat{\mu}_{1t}(z)$ is given by
\[
    \varphi_1(D) = \sum_{p=1}^P I(Z_p = z) Y_{pt} - I(Z_p = 0) Y_{pt} - \mu_t(z),
\]
where $\mu_t(z) = E(Y_{pt}^z - Y_{pt}^0)$ is the true causal effect, which we remind the reader is independent of period under Assumption \ref{sameeffect}, and $D$ consists of all the data from the subject. We derive the estimator $\hat{\mu}_{3t}(z)$ by the same calculations as \cite{tsiatis_semiparametric_2006}, namely
\begin{equation}
\varphi_3(D) = \varphi_1(D) - (E(\varphi_1(D) | \bar{X}, \bar{Z}) - E(\varphi_1(D) | \bar{X})), \label{influence}
\end{equation}
where $\bar{X} = \{X_1,\ldots, X_P\}$, and likewise for $\bar{Z}$. 
We calculate the expectations:
\begin{align*}
    E(\varphi_1(D) | \bar{X}, \bar{Z}) =& E\left(\sum_{p=1}^P I(Z_p = z) Y_{pt} - I(Z_p = 0) Y_{pt} - \mu_t(z) | \bar{X},\bar{Z}\right) \\
    =& \sum_{p=1}^P I(Z_p = z) E(Y_{pt} | \bar{X},\bar{Z}) - I(Z_p = 0) E(Y_{pt} |\bar{X},\bar{Z}) - \mu_t(z) \\
    =& \sum_{p=1}^P I(Z_p = z) E(Y_{pt} | X_p, Z_p = z) - I(Z_p = 0) E(Y_{pt} | X_p, Z_p = 0) - \mu_t(z),
\end{align*}
and
\begin{align*}
    E(\varphi_1(D) | \bar{X}) =& \sum_{p=1}^P \mathbb{P}(Z_p = z) E(Y_{pt} | X_p, Z_p = z) - \mathbb{P}(Z_p = 0) E(Y_{pt} | X_p, Z_p = 0) - \mu_t(z) \\
    =& \sum_{p=1}^P \frac{1}{P} E(Y_{pt} | X_p, Z_p = z) - \frac{1}{P} E(Y_{pt} | X_p, Z_p = 0) - \mu_t(z),
\end{align*}
The expectations $E(Y_{pt} | X_p, Z_p = z)$ and $E(Y_{pt} | X_p, Z_p = z)$ require a model, $h$.

When plugging the above into (\ref{influence}) we get an estimator of the influence function, $\varphi_3(D)$, and the estimator $\hat{\mu}_{3t}(z)$ is obtained by simple isolation from the equation
\[
\sqrt{n}(\hat{\mu}_{3t}(z) - \mu_t(z)) = \frac{1}{\sqrt{n}}\sum_{i=1}^n \varphi_3(D_i) + o_P(1),
\]
which defines influence functions. 

\subsection{Influence function for $\mathbf{\hat{\boldsymbol\mu}_{3t}(z)}$}
It might seem like the influence function was derived above. However, one could imagine that the estimation of the models in the first step would change the influence function, so we get a term like $G_\beta \psi(d)$ in the case of $\hat{\mu}_{2t}(z)$. 

We can simply take the estimator $\hat{\mu}_{3t}(z)$, subtract $\mu_t(z)$ and multiply by $\sqrt{n}$ in order to get
\begin{align}
\sqrt{n}(\hat{\mu}_{3t}(z) - \mu_t(z)) =& \frac{1}{\sqrt{n}} \sum_{i=1}^n   \sum_{p=1}^P I(Z_p = z) Y_{pt} - I(Z_p = 0) Y_{pt} - \mu_t(z) \nonumber \\ &- \left[\left(I(Z_{ip} = z) - \frac{1}{P}\right)h_{pt}(X_{ip}, z, \hat{\beta}_n) - \left(I(Z_{ip} = 0) 
- \frac{1}{P}\right)h_{pt}(X_{ip}, 0, \hat{\beta}_n)\right]. 
\label{biginfluence}
\end{align}
This looks like an equation that gives us the influence function, but note that $\hat{\beta}_n$ is estimated on all data, so the terms above are strictly speaking not independent. To proceed, we need to assume something about what happens to the model $h$ as $n$ gets bigger. Assume that there exists $\beta^*$ such that
\[
\sqrt{n}(\hat{\beta}_n - \beta^*)
\]
are bounded in probability and that $h$ as a function of $\beta$ is differentiable in a neighborhood of $\beta^*$. Then it is possible to Taylor expand $h$ in (\ref{biginfluence}) to get
\begin{align*}
    \sqrt{n}(\hat{\mu}_{3t}(z) - \mu_t(z)) =& \frac{1}{\sqrt{n}} \sum_{i=1}^n   \sum_{p=1}^P I(Z_p = z) Y_{pt} - I(Z_p = 0) Y_{pt} - \mu_t(z) \nonumber \\ &- \left[\left(I(Z_{ip} = z) - \frac{1}{P}\right)\left(h_{pt}(X_{ip}, z, \beta^*) + 0.5\frac{\partial h_{pt}(X_{ip}, z, \beta^*)}{\partial \beta}(\hat{\beta}_n - \beta^*) \right) \right. \\
    &\left.- \left(I(Z_{ip} = 0) 
- \frac{1}{P}\right)\left(h_{pt}(X_{ip}, 0, \beta^*) + 0.5\frac{\partial h_{pt}(X_{ip}, 0, \beta^*)}{\partial \beta}(\hat{\beta}_n - \beta^*) \right)  \right] + o_p(1) \\
=&\frac{1}{\sqrt{n}} \sum_{i=1}^n   \sum_{p=1}^P I(Z_p = z) Y_{pt} - I(Z_p = 0) Y_{pt} - \mu_t(z) \nonumber \\ &- \left[\left(I(Z_{ip} = z) - \frac{1}{P}\right)h_{pt}(X_{ip}, z, \beta^*) - \left(I(Z_{ip} = 0) 
- \frac{1}{P}\right)h_{pt}(X_{ip}, 0, \beta^*)\right] + o_p(1), 
\end{align*}
which gives us the influence function for $\hat{\mu}_{3t}(z)$. It looks a lot like what we would expect from (\ref{biginfluence}) with the detail that we have $h_{pt}(X_{ip}, z, \beta^*)$ and $h_{pt}(X_{ip}, 0, \beta^*)$ instead of $h_{pt}(X_{ip}, z, \hat{\beta}_n)$ and $h_{pt}(X_{ip}, 0, \hat{\beta}_n)$.
The influence function for $\hat{\mu}_{3t}(z)$ has the added advantage that it is easier to implement a variance estimator using that influence function than the influence function for $\hat{\mu}_{2t}(z)$. Therefore, it is preferable to use the influence function for $\hat{\mu}_{3t}(z)$ to estimate the variance when the two estimators coincide.

\section{Appendix D: Extra tables}
In the following tables, columns correspond to the models also used for the data example in the same order. Names are removed in the interest of space. 
\begin{table}[ht]
\centering
\begin{tabular}{ccrrrrrrrrrrrrr}
  \hline
Treatment & Time & 1 & 2 & 3 & 4 & 5 & 6 & 7 & 8 & 9 & 10 & 11 & 12 & 13 \\ 
  \hline
C & 0.5 & 3.76 & 3.71 & 3.75 & 4.04 & 3.97 & 4.19 & 4.01 & 3.96 & 4.16 & 3.94 & 3.89 & 4.09 & 2.66 \\ 
& 1.0 & 7.95 & 7.69 & 7.83 & 7.93 & 7.60 & 7.93 & 7.96 & 7.58 & 7.94 & 7.96 & 7.58 & 7.94 & 6.44 \\ 
& 1.5 & 5.62 & 5.59 & 5.59 & 5.68 & 5.71 & 5.90 & 5.67 & 5.69 & 5.88 & 5.67 & 5.69 & 5.88 & 4.38 \\ 
& 2.5 & 3.99 & 3.47 & 3.48 & 4.08 & 3.58 & 3.77 & 4.17 & 3.65 & 3.83 & 4.16 & 3.65 & 3.83 & 2.33 \\ 
& 4.0 & 4.32 & 4.70 & 4.72 & 4.53 & 4.86 & 5.05 & 4.57 & 4.86 & 5.03 & 4.55 & 4.86 & 5.03 & 3.53 \\ 
D & 0.5 & 5.11 & 6.04 & 6.03 & 5.05 & 6.05 & 6.05 & 5.31 & 6.22 & 6.23 & 4.93 & 5.89 & 5.83 & 5.95 \\ 
& 1.0 & 9.80 & 9.96 & 10.09 & 9.90 & 9.94 & 10.09 & 9.95 & 9.95 & 10.11 & 9.96 & 9.94 & 10.11 & 9.83 \\ 
& 1.5 & 7.39 & 7.20 & 7.20 & 7.28 & 7.20 & 7.21 & 7.50 & 7.36 & 7.38 & 7.50 & 7.35 & 7.37 & 7.09 \\ 
& 2.5 & 6.05 & 5.74 & 5.76 & 6.06 & 5.74 & 5.77 & 6.67 & 6.38 & 6.45 & 6.63 & 6.37 & 6.44 & 6.16 \\ 
& 4.0 & 5.79 & 5.53 & 5.66 & 5.69 & 5.51 & 5.68 & 5.98 & 5.76 & 5.96 & 5.89 & 5.75 & 5.95 & 5.68 \\ 
E & 0.5 & 0.88 & 1.41 & 1.41 & 0.78 & 1.41 & 1.41 & 1.01 & 1.61 & 1.62 & 0.75 & 1.37 & 1.42 & 1.38 \\ 
& 1.0 & 7.16 & 7.19 & 7.34 & 7.24 & 7.15 & 7.34 & 7.19 & 7.09 & 7.31 & 7.19 & 7.08 & 7.30 & 7.06 \\ 
& 1.5 & 6.03 & 6.03 & 6.08 & 5.92 & 6.02 & 6.08 & 6.13 & 6.15 & 6.24 & 6.13 & 6.15 & 6.24 & 5.99 \\ 
& 2.5 & 6.87 & 6.50 & 6.56 & 6.87 & 6.49 & 6.56 & 7.43 & 7.09 & 7.20 & 7.41 & 7.09 & 7.19 & 6.95 \\ 
& 4.0 & 8.32 & 8.11 & 8.19 & 8.22 & 8.09 & 8.19 & 8.49 & 8.30 & 8.43 & 8.43 & 8.29 & 8.42 & 8.18 \\ 
   \hline
\end{tabular}
\caption{Estimates in data example.}
\end{table}

\begin{table}[ht]
\centering
\begin{tabular}{ccrrrrrrrrrrrrr}
  \hline
Treatment & Time & 1 & 2 & 3 & 4 & 5 & 6 & 7 & 8 & 9 & 10 & 11 & 12 & 13 \\ 
  \hline
C & 0.5 & 1.21 & 1.26 & 1.30 & 1.22 & 1.28 & 1.35 & 1.23 & 1.28 & 1.31 & 1.24 & 1.28 & 1.29 & 1.81 \\ 
& 1.0 & 1.08 & 1.21 & 1.25 & 1.08 & 1.22 & 1.24 & 1.11 & 1.26 & 1.25 & 1.22 & 1.26 & 1.24 & 2.02 \\ 
& 1.5 & 1.13 & 1.08 & 1.10 & 1.12 & 1.07 & 1.08 & 1.15 & 1.10 & 1.09 & 1.01 & 1.07 & 1.07 & 1.76 \\ 
& 2.5 & 1.08 & 1.12 & 1.16 & 1.07 & 1.09 & 1.13 & 1.10 & 1.13 & 1.17 & 1.13 & 1.14 & 1.15 & 1.83 \\ 
& 4.0 & 1.35 & 1.36 & 1.40 & 1.34 & 1.35 & 1.39 & 1.39 & 1.37 & 1.40 & 1.34 & 1.39 & 1.38 & 2.03 \\ 
D & 0.5 & 1.01 & 1.00 & 1.05 & 1.09 & 1.12 & 1.27 & 1.07 & 1.15 & 1.26 & 1.15 & 1.15 & 1.25 & 1.48 \\ 
& 1.0 &  1.21 & 1.23 & 1.25 & 1.23 & 1.23 & 1.26 & 1.14 & 1.12 & 1.14 & 1.06 & 1.10 & 1.13 & 1.90 \\ 
& 1.5 & 1.13 & 1.14 & 1.18 & 1.14 & 1.15 & 1.24 & 1.10 & 1.11 & 1.17 & 1.10 & 1.12 & 1.16 & 1.88 \\ 
& 2.5 & 1.34 & 1.37 & 1.42 & 1.36 & 1.40 & 1.50 & 1.33 & 1.37 & 1.47 & 1.37 & 1.40 & 1.45 & 1.89 \\ 
& 4.0 & 1.31 & 1.40 & 1.45 & 1.30 & 1.40 & 1.46 & 1.27 & 1.37 & 1.40  & 1.34 & 1.37 & 1.38 & 2.10 \\ 
E & 0.5 & 1.41 & 1.38 & 1.42 & 1.39 & 1.38 & 1.42 & 1.30 & 1.29 & 1.31 & 1.27 & 1.28 & 1.29 & 1.79 \\ 
& 1.0 &  1.20 & 1.22 & 1.24 & 1.21 & 1.22 & 1.24 & 1.14 & 1.14 & 1.14 & 1.07 & 1.11 & 1.12 & 1.68 \\ 
& 1.5 & 1.31 & 1.35 & 1.40 & 1.30 & 1.33 & 1.39 & 1.28 & 1.31 & 1.33 & 1.30 & 1.33 & 1.31 & 1.97 \\ 
& 2.5 &  1.21 & 1.26 & 1.31 & 1.18 & 1.24 & 1.28 & 1.32 & 1.40 & 1.41 & 1.38 & 1.40 & 1.39 & 2.04 \\ 
& 4.0 &  1.53 & 1.49 & 1.54 & 1.52 & 1.48 & 1.52 & 1.43 & 1.44 & 1.44 & 1.40 & 1.43 & 1.42 & 2.05 \\ 
   \hline
\end{tabular}
\caption{Standard error estimates in data example.}
\end{table}

\begin{table}[ht]
\centering
\begin{tabular}{ccrrrrrrrrrrrrr}
  \hline
Treatment & Time & 1 & 2 & 3 & 4 & 5 & 6 & 7 & 8 & 9 & 10 & 11 & 12 & 13 \\ 
  \hline
C & 0.5 & 0.00 & 0.00 & 0.00 & 0.00 & 0.00 & 0.00 & 0.00 & 0.00 & 0.00 & -0.02 & -0.01 & -0.02 & 0.01 \\ 
  & 1.0 & 0.00 & 0.00 & 0.00 & 0.00 & 0.00 & 0.00 & 0.00 & 0.00 & 0.00 & 0.00 & 0.00 & 0.00 & 0.01 \\ 
& 1.5 &  -0.01 & -0.01 & -0.01 & -0.01 & -0.01 & -0.01 & -0.01 & -0.01 & -0.01 & -0.01 & -0.01 & -0.01 & -0.00 \\ 
& 2.5 &  0.01 & 0.01 & 0.01 & 0.01 & 0.01 & 0.01 & 0.00 & 0.00 & 0.00 & 0.00 & 0.00 & 0.00 & 0.01 \\ 
& 4.0 &  0.01 & 0.01 & 0.01 & 0.01 & 0.01 & 0.01 & 0.01 & 0.01 & 0.01 & 0.01 & 0.01 & 0.01 & 0.02 \\ 
D & 0.5 &  0.01 & 0.01 & 0.01 & 0.02 & 0.02 & 0.02 & 0.02 & 0.02 & 0.02 & -0.01 & -0.01 & -0.02 & 0.02 \\ 
& 1.0 &  0.01 & 0.01 & 0.01 & 0.01 & 0.01 & 0.01 & 0.01 & 0.01 & 0.01 & 0.01 & 0.01 & 0.01 & 0.02 \\ 
& 1.5 &  -0.01 & -0.01 & -0.01 & -0.01 & -0.01 & -0.02 & -0.01 & -0.01 & -0.01 & -0.01 & -0.01 & -0.01 & -0.01 \\ 
& 2.5 &  -0.01 & -0.01 & -0.01 & -0.01 & -0.01 & -0.01 & -0.02 & -0.02 & -0.02 & -0.02 & -0.02 & -0.02 & -0.01 \\ 
& 4.0 &  -0.00 & -0.00 & -0.00 & -0.00 & -0.00 & -0.00 & -0.00 & -0.00 & -0.00 & -0.01 & -0.01 & -0.00 & -0.00 \\ 
E & 0.5 &  0.02 & 0.02 & 0.02 & 0.02 & 0.02 & 0.02 & 0.02 & 0.02 & 0.02 & 0.02 & 0.02 & 0.01 & 0.03 \\ 
& 1.0 &  0.00 & 0.00 & 0.00 & 0.01 & 0.01 & 0.00 & 0.01 & 0.01 & 0.01 & 0.01 & 0.01 & 0.01 & 0.01 \\ 
& 1.5 &  0.00 & 0.00 & 0.00 & 0.00 & 0.00 & 0.00 & 0.00 & 0.00 & -0.00 & 0.00 & 0.00 & -0.00 & 0.01 \\ 
& 2.5 &  -0.02 & -0.02 & -0.02 & -0.02 & -0.02 & -0.02 & -0.02 & -0.02 & -0.02 & -0.02 & -0.02 & -0.02 & -0.01 \\ 
& 4.0 &  0.02 & 0.02 & 0.02 & 0.01 & 0.01 & 0.01 & 0.02 & 0.02 & 0.02 & 0.02 & 0.02 & 0.02 & 0.02 \\ 
   \hline
\end{tabular}
\caption{Bias in simulations.}
\end{table}

\begin{table}[ht]
\centering
\begin{tabular}{ccrrrrrrrrrrrrr}
  \hline
Treatment & Time & 1 & 2 & 3 & 4 & 5 & 6 & 7 & 8 & 9 & 10 & 11 & 12 & 13 \\ 
  \hline
C & 0.5 & 1.45 & 1.45 & 1.45 & 1.47 & 1.47 & 1.51 & 1.45 & 1.45 & 1.48 & 1.44 & 1.44 & 1.47 & 1.92 \\ 
& 1.0 & 1.22 & 1.22 & 1.22 & 1.22 & 1.22 & 1.22 & 1.21 & 1.21 & 1.22 & 1.21 & 1.21 & 1.22 & 1.97 \\ 
& 1.5 &  1.25 & 1.25 & 1.25 & 1.26 & 1.26 & 1.29 & 1.25 & 1.24 & 1.27 & 1.25 & 1.24 & 1.27 & 1.89 \\ 
& 2.5 &  1.42 & 1.42 & 1.42 & 1.42 & 1.42 & 1.44 & 1.43 & 1.43 & 1.46 & 1.43 & 1.43 & 1.46 & 1.95 \\ 
& 4.0 &  1.47 & 1.47 & 1.47 & 1.48 & 1.48 & 1.51 & 1.47 & 1.47 & 1.50 & 1.47 & 1.47 & 1.49 & 1.95 \\ 
D & 0.5 &  1.45 & 1.45 & 1.45 & 1.48 & 1.48 & 1.51 & 1.45 & 1.45 & 1.48 & 1.45 & 1.45 & 1.48 & 1.92 \\ 
& 1.0 &  1.23 & 1.23 & 1.23 & 1.22 & 1.22 & 1.23 & 1.21 & 1.21 & 1.22 & 1.21 & 1.21 & 1.22 & 1.96 \\ 
& 1.5 &  1.25 & 1.25 & 1.25 & 1.26 & 1.26 & 1.29 & 1.25 & 1.25 & 1.27 & 1.25 & 1.25 & 1.27 & 1.90 \\ 
& 2.5 &  1.41 & 1.41 & 1.41 & 1.41 & 1.41 & 1.44 & 1.44 & 1.44 & 1.47 & 1.44 & 1.44 & 1.47 & 1.94 \\ 
& 4.0 &  1.47 & 1.47 & 1.47 & 1.48 & 1.48 & 1.51 & 1.47 & 1.47 & 1.50 & 1.47 & 1.47 & 1.50 & 1.94 \\ 
E & 0.5 &  1.44 & 1.44 & 1.44 & 1.47 & 1.47 & 1.51 & 1.45 & 1.45 & 1.48 & 1.44 & 1.44 & 1.47 & 1.92 \\ 
 & 1.0 & 1.24 & 1.24 & 1.24 & 1.23 & 1.24 & 1.24 & 1.23 & 1.23 & 1.23 & 1.23 & 1.23 & 1.23 & 1.96 \\ 
 & 1.5 & 1.25 & 1.25 & 1.25 & 1.27 & 1.26 & 1.29 & 1.25 & 1.25 & 1.27 & 1.25 & 1.25 & 1.27 & 1.89 \\ 
 & 2.5 & 1.41 & 1.41 & 1.41 & 1.42 & 1.42 & 1.44 & 1.44 & 1.43 & 1.46 & 1.43 & 1.43 & 1.46 & 1.93 \\ 
 & 4.0 & 1.48 & 1.48 & 1.48 & 1.49 & 1.49 & 1.52 & 1.48 & 1.48 & 1.51 & 1.48 & 1.47 & 1.50 & 1.94 \\ 
   \hline
\end{tabular}
\caption{Standard deviation of estimates in simulations.}
\end{table}

\begin{table}[ht]
\centering
\begin{tabular}{ccrrrrrrrrrrrrr}
  \hline
Treatment & Time & 1 & 2 & 3 & 4 & 5 & 6 & 7 & 8 & 9 & 10 & 11 & 12 & 13 \\ 
  \hline
C & 0.5 &1.42 & 1.42 & 1.47 & 1.45 & 1.44 & 1.52 & 1.44 & 1.44 & 1.49 & 1.37 & 1.37 & 1.47 & 1.95 \\ 
& 1.0 &  1.19 & 1.19 & 1.23 & 1.20 & 1.20 & 1.23 & 1.20 & 1.21 & 1.23 & 1.11 & 1.16 & 1.21 & 1.96 \\ 
& 1.5 &  1.21 & 1.21 & 1.25 & 1.23 & 1.22 & 1.29 & 1.23 & 1.23 & 1.27 & 1.14 & 1.18 & 1.25 & 1.90 \\ 
& 2.5 &  1.37 & 1.37 & 1.41 & 1.38 & 1.37 & 1.44 & 1.41 & 1.41 & 1.46 & 1.33 & 1.35 & 1.44 & 1.94 \\ 
& 4.0 &  1.43 & 1.43 & 1.48 & 1.45 & 1.45 & 1.52 & 1.46 & 1.46 & 1.51 & 1.38 & 1.38 & 1.49 & 1.94 \\ 
D & 0.5 & 1.42 & 1.42 & 1.47 & 1.45 & 1.45 & 1.52 & 1.44 & 1.44 & 1.49 & 1.37 & 1.37 & 1.47 & 1.95 \\ 
 & 1.0 & 1.19 & 1.19 & 1.23 & 1.19 & 1.19 & 1.23 & 1.20 & 1.20 & 1.23 & 1.10 & 1.15 & 1.21 & 1.96 \\ 
& 1.5 &  1.22 & 1.22 & 1.25 & 1.23 & 1.23 & 1.29 & 1.24 & 1.24 & 1.28 & 1.15 & 1.19 & 1.26 & 1.90 \\ 
& 2.5 &  1.37 & 1.37 & 1.41 & 1.38 & 1.38 & 1.44 & 1.43 & 1.43 & 1.48 & 1.34 & 1.36 & 1.45 & 1.95 \\ 
 & 4.0 & 1.43 & 1.43 & 1.48 & 1.45 & 1.45 & 1.52 & 1.46 & 1.46 & 1.51 & 1.38 & 1.38 & 1.49 & 1.94 \\ 
E & 0.5 & 1.42 & 1.42 & 1.47 & 1.45 & 1.45 & 1.53 & 1.45 & 1.45 & 1.50 & 1.37 & 1.37 & 1.47 & 1.95 \\ 
 & 1.0 & 1.19 & 1.19 & 1.23 & 1.19 & 1.19 & 1.23 & 1.20 & 1.20 & 1.23 & 1.10 & 1.15 & 1.21 & 1.96 \\ 
 & 1.5 & 1.21 & 1.21 & 1.25 & 1.23 & 1.23 & 1.29 & 1.23 & 1.23 & 1.27 & 1.14 & 1.18 & 1.25 & 1.89 \\ 
 & 2.5 & 1.36 & 1.36 & 1.41 & 1.37 & 1.37 & 1.44 & 1.41 & 1.41 & 1.46 & 1.33 & 1.35 & 1.44 & 1.93 \\ 
 & 4.0 & 1.43 & 1.43 & 1.48 & 1.45 & 1.45 & 1.52 & 1.46 & 1.46 & 1.51 & 1.38 & 1.38 & 1.49 & 1.94 \\ 
   \hline
\end{tabular}
\caption{Average standard error estimates in simulations.}
\end{table}

\begin{table}[ht]
\centering
\begin{tabular}{ccrrrrrrrrrrrrr}
  \hline
Treat & Time & 1 & 2 & 3 & 4 & 5 & 6 & 7 & 8 & 9 & 10 & 11 & 12 & 13 \\ 
  \hline
C & 0.5 & 0.946 & 0.946 & 0.954 & 0.947 & 0.946 & 0.953 & 0.951 & 0.951 & 0.953 & 0.938 & 0.938 & 0.950 & 0.952 \\ 
& 1.0 &  0.948 & 0.948 & 0.954 & 0.946 & 0.947 & 0.953 & 0.951 & 0.951 & 0.953 & 0.937 & 0.938 & 0.951 & 0.955 \\ 
& 1.5 &  0.946 & 0.946 & 0.953 & 0.946 & 0.947 & 0.953 & 0.951 & 0.951 & 0.953 & 0.940 & 0.939 & 0.951 & 0.957 \\ 
& 2.5 &  0.948 & 0.948 & 0.954 & 0.949 & 0.948 & 0.955 & 0.951 & 0.951 & 0.953 & 0.927 & 0.940 & 0.950 & 0.949 \\ 
& 4.0 &  0.946 & 0.946 & 0.952 & 0.947 & 0.947 & 0.952 & 0.952 & 0.952 & 0.953 & 0.924 & 0.939 & 0.950 & 0.952 \\ 
D & 0.5 & 0.940 & 0.940 & 0.948 & 0.941 & 0.942 & 0.948 & 0.947 & 0.947 & 0.950 & 0.923 & 0.936 & 0.948 & 0.950 \\ 
& 1.0 &  0.945 & 0.945 & 0.952 & 0.945 & 0.945 & 0.953 & 0.951 & 0.951 & 0.954 & 0.930 & 0.940 & 0.950 & 0.951 \\ 
 & 1.5 & 0.944 & 0.944 & 0.951 & 0.946 & 0.945 & 0.952 & 0.949 & 0.949 & 0.954 & 0.930 & 0.940 & 0.950 & 0.950 \\ 
 & 2.5 & 0.947 & 0.947 & 0.952 & 0.946 & 0.946 & 0.953 & 0.949 & 0.949 & 0.952 & 0.928 & 0.938 & 0.948 & 0.951 \\ 
 & 4.0 & 0.945 & 0.945 & 0.952 & 0.945 & 0.945 & 0.952 & 0.948 & 0.948 & 0.952 & 0.932 & 0.935 & 0.947 & 0.949 \\ 
 E & 0.5 & 0.944 & 0.944 & 0.950 & 0.944 & 0.943 & 0.950 & 0.948 & 0.948 & 0.950 & 0.932 & 0.937 & 0.947 & 0.954 \\ 
 & 1.0 & 0.941 & 0.941 & 0.948 & 0.943 & 0.942 & 0.949 & 0.946 & 0.947 & 0.951 & 0.931 & 0.935 & 0.948 & 0.952 \\ 
 & 1.5 & 0.946 & 0.946 & 0.954 & 0.946 & 0.947 & 0.955 & 0.951 & 0.950 & 0.956 & 0.937 & 0.938 & 0.953 & 0.949 \\ 
 & 2.5 & 0.944 & 0.944 & 0.951 & 0.947 & 0.947 & 0.955 & 0.953 & 0.953 & 0.956 & 0.935 & 0.937 & 0.952 & 0.954 \\ 
 & 4.0 & 0.947 & 0.947 & 0.954 & 0.945 & 0.945 & 0.952 & 0.951 & 0.951 & 0.954 & 0.939 & 0.939 & 0.951 & 0.952 \\ 
   \hline
\end{tabular}
\caption{Coverage frequencies of confidence intervals in simulations.}
\end{table}
\end{document}